%
%
%
%
%
\documentclass{article}
%
%
%
%
%
\PassOptionsToPackage{numbers,compress}{natbib}
\usepackage[main,final]{neurips_2025}
%
%
\usepackage{booktabs}       
\usepackage{amsfonts}       
\usepackage{nicefrac}       
\usepackage{microtype}      
\usepackage[dvipsnames]{xcolor}         
\usepackage{amsmath}
\usepackage[utf8]{inputenc} 
\usepackage[T1]{fontenc}    
\usepackage[hidelinks]{hyperref}       
\usepackage{url}            
\usepackage{amsfonts}       
\usepackage{graphicx}
\usepackage{algorithm}
\usepackage{algorithmic}
\usepackage{amsthm}
\usepackage{dsfont}
\usepackage{bm} 
\usepackage{colonequals}
\usepackage{enumitem}
\usepackage{wrapfig}

\def \bx { \bm{x } }
\def \bX { \mathbf{X } }

\def \bZ { \mathbf{Z } }

\def \bX { \mathbf{X } }
\def \bY { \mathbf{Y } }
\def \by { \bm{y } }

\DeclareMathOperator*{\argmax}{argmax}

\newtheorem{lemma}{Lemma}
\newtheorem{proposition}{Proposition}

\title{Covariate-moderated Empirical Bayes Matrix Factorization}

%
\author{William R. P. Denault, Karl Tayeb, Peter Carbonetto \&
Matthew Stephens \\
Departments of Statistics and Human Genetics \\
University of Chicago \\
Chicago, IL 60637, USA \\
\texttt{\{wdenault,ktayeb,pcarbo,mstephens\}@uchicago.edu} \\
\AND
Jason Willwerscheid \\
Mathematics and Computer Science \\
Providence College \\
Providence, RI 02918, USA \\
\texttt{jwillwer@providence.edu} 
}

\begin{document}

\maketitle

\begin{abstract}
Matrix factorization is a fundamental method in statistics and machine
learning for inferring and summarizing structure in multivariate
data. Modern data sets often come with ``side information'' of various
forms (images, text, graphs) that can be leveraged to improve
estimation of the underlying structure. However, existing methods that
leverage side information are limited in the types of data they can
incorporate, and they assume specific parametric models. Here, we
introduce a novel method for this problem, {\em covariate-moderated
  empirical Bayes matrix factorization} (cEBMF). cEBMF is a modular
framework that accepts any type of side information that is
processable by a probabilistic model or a neural network.  The cEBMF
framework can accommodate different assumptions and constraints on the
factors through the use of different priors, and it adapts these
priors to the data. We demonstrate the benefits of cEBMF in
simulations and in analyses of spatial transcriptomics and
collaborative filtering data. A PyTorch-based implementation of cEBMF
with flexible priors is available at
\url{https://github.com/william-denault/cebmf_torch}.
\end{abstract}

\section{Introduction}
\label{submission}
 
Matrix factorization methods, which include principal component
analysis (PCA), factor analysis, and non-negative matrix factorization
(NMF) \citep{lee_learning_1999, lee-seung-2001,
  gillis_nonnegative_2021}, are very widely used methods for inferring
latent structure from data, performing exploratory data analyses, and
visualizing large data sets (e.g., \citep{sainburg_finding_2020,
  alexander_capturing_2022, novembre_interpreting_2008}). Matrix
factorization methods are also instrumental in other statistical
analyses such as adjusting for unobserved confounding
\citep{leek_capturing_2007}.
%
%
When factorizing a matrix, say ${\bf Z}$, the matrix may be
accompanied by additional row or column data---``side
information''---that may be able to ``guide'' the matrix factorization
algorithm toward a more accurate or interpretable factorization. A
recent prominent example of this in genomics research is spatial
transcriptomics data \citep{marx_method_2021}, which is expression
profiled in many genes at many spatial locations (``pixels'')
\citep{vandereyken_methods_2023}. For a variety of reasons, one
typically seeks to factorize ${\bf Z}$, the matrix of gene expression
profiles. But the 2-d coordinates of the pixels also provide important
information about the biological context of the cells; for example, we
might expect nearby pixels to belong to the same cell type or tissue
region. Therefore, ``spatially aware'' matrix factorization methods
have recently been proposed for spatial transcriptomics data
\citep{mefisto, shang_spatially_2022, townes_nonnegative_2023}.


\begin{figure}[t]
\centering
\includegraphics[width=\textwidth]{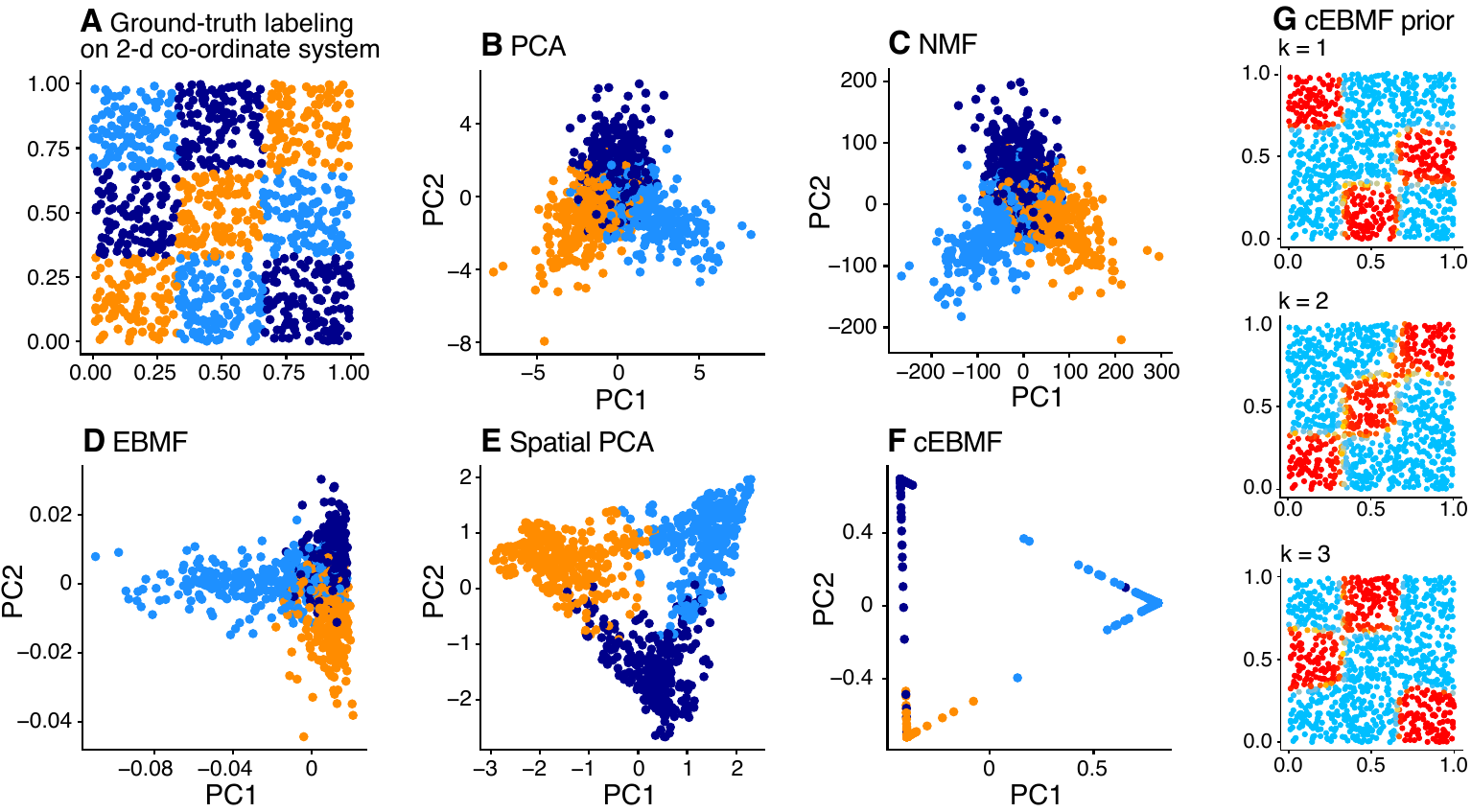}
\label{fig:toy_example}
\vspace*{-1em}
\caption{Toy simulation illustrating cEBMF for learning a matrix
  factorization, ${\bf Z} \approx {\bf L} {\bf F}^T$. In this example,
  ${\bf Z}$ is a $\mbox{1,000} \times 200$ matrix. Each of the $n =
  \mbox{1,000}$ data points is assigned to one of three clusters
  ({\color{Orange} \bf orange}, {\color{SkyBlue} \bf light blue},
  {\color{BlueViolet} \bf dark blue}). Points near each other tend to
  be assigned to the same cluster, except near the boundaries
  (A). Without the side information (the 2-d coordinates in A), PCA,
  NMF and EBMF with $K = 3$ factors cluster some points accurately,
  but many other points are not clustered accurately (B--D). By
  contrast, Spatial PCA \citep{shang_spatially_2022} and our new
  method, cEBMF, which both incorporate the side information into the
  prior, more accurately cluster the points (E, F). (For consistency
  of visualization, the top 2 PCs of the ${\bf L}$ matrices from NMF,
  EBMF and cEBMF are shown.) Spatial PCA assumes the data points are
  spatial, whereas cEBMF does make this assumption; instead, it has a flexible
  prior that is adapted to the data. This learned prior is shown in G:
  the color of the points shows
  %
  %
  the prior probability that row $i$, column $k$ of
  ${\bf L}$ is nonzero ({\color{SkyBlue} \bf blue} = low prior probability,
  {\color{OrangeRed} \bf red} = high prior probability). See
  Sections \ref{sec:cebmf-main} and \ref{sec:exp} for definitions and
  additional details.}
  %
  %
\end{figure}

In this paper, we describe a novel matrix factorization framework that
allows high-dimensional row and column data to guide the matrix
factorizations without having to make specific assumptions about how
these data inform the factorization. For example, although our
framework can be applied to data
%
%
that exhibit spatial properties, it does not assume or require that
the data be spatial. Our framework is also flexible in that it
includes many existing approaches as special cases, including
unconstrained matrix factorization \citep{wang_empirical_2021,
  zhong_empirical_2022}, non-negative matrix factorization
\citep{lin_optimization_2020}, semi-non-negative matrix factorization
\citep{ding_convex_2010}, and more recent methods that incorporate
side information \cite{wang_mfai_2024}.  These features are achieved
by taking an empirical Bayes approach, building on the recent
empirical Bayes matrix factorization (EBMF) framework
\citep{wang_empirical_2021, zhong_empirical_2022}. In particular, we
extend the EBMF approach of \cite{wang_empirical_2021} with
adaptive priors that are modified by the side information. We call
this approach ``covariate-moderated empirical Bayes matrix
factorization,'' or ``cEBMF'' for short. See
Fig.~\ref{fig:toy_example} for a toy example that illustrates the key
features of cEBMF.


\section{Related work}
\label{sec:related_work}

The literature on matrix factorization methods that incorporate side
information is quite extensive. The different methods make different
modeling assumptions, and are typically motivated by certain types of
data. Although it is not possible to review all relevant literature
here, we discuss a few of the most important or related methods.



Several variants of the topic model---which can be viewed as matrix
factorizations with ``sum-to-one'' constraints on ${\bf L}$ and ${\bf
  F}$ \citep{fastTopics}---incorporate side information in different
ways; for example, the correlated topic model
\citep{lafferty_correlated_2005} and the structural topic model
\citep{roberts_model_2016} incorporate document-level side information
into the priors on ${\bf L}$. Collective matrix factorization (CMF)
\citep{singh_relational_2008, vandyk2009group, bouchard2013convex} has
gained considerable interest, but CMF is based on ideas that are quite
different from cEBMF: like cEBMF, CMF assumes that the side
information is the form of a matrix, but unlike cEBMF, CMF assumes
that the side information factorizes in a similar way to ${\bf
  Z}$. Clearly, this assumption will not make sense for some types of
data. Another prominent theme in matrix factorization with side
information is incorporating group-level or categorical information,
including ontological data. Among the methods in this area are CTPF
\cite{gopalan2014contentbased} and the method of
\cite{hu2016nmf}. Another important class of methods related to cEBMF
from the deep learning literature are variational autoencoders (VAE)
\cite{kingma_auto-encoding_2014}, conditional variational autoencoders
(cVAE) \cite{sohn_learning_2015} and neural collaborative filtering
(NCF) \cite{he_neural_2017}. These methods generalize the concept of
matrix factorization to nonlinear embeddings.

The method that is most closely related to cEBMF is MFAI
\citep{wang_mfai_2024} (see also \cite{porteous_bayesian_2010-1} for
related ideas). MFAI is in fact a special case of cEBMF in which the
priors on ${\bf F}$ are normal and the prior means are informed by the
covariates. Like cEBMF, MFAI allows these priors to be adapted
separately for each dimension $k$. However, MFAI is not nearly as
general as cEBMF; it implements only a single model, a single prior
family with a specific parametric form, a specific procedure for
fitting these priors (using gradient boosted tree methods
\cite{friedman2001greedy}), and it only accommodates row-wise side
information.

%
%

Several matrix factorization methods have been developed specifically
for spatial transcriptomics data.
%
%
Spatial PCA \citep{shang_spatially_2022} models the spatial similarity
among rows of ${\bf L}$ using Gaussian process prior. (Spatial PCA is
similar to GP-LVM \citep{lawrence_hierarchical_2007}. See also
\citep{zou2012kernelized}.) An NMF version of this approach generates
``parts-based representations'' guided by the spatial context of the
data points \citep{townes_nonnegative_2023}. More recently, IRIS
\citep{iris} regularizes the matrix factors through a penalty function
that encodes the spatial information in a graph (see also
\cite{cai_graph_2011}).

\section{Covariate-moderated empirical Bayes matrix factorization}
\label{sec:cebmf-main}

\subsection{Background: empirical Bayes matrix factorization}

Empirical Bayes matrix factorization (EBMF)
\citep{wang_empirical_2021} is a flexible
framework for matrix factorization: it approximates a matrix ${\bf Z}
\in \mathbb{R}^{n \times p}$ as the product of two low-rank matrices,
\begin{equation}
{\bf Z} \approx {\bf L}{\bf F}^T,
\end{equation}
where ${\bf L} \in \mathbb{R}^{n \times K}$, ${\bf F} \in
\mathbb{R}^{p \times K}$, and $K \geq 1$. (In our applications, $K \ll
n$, $p$.) EBMF assumes a normal model of the data,
\begin{equation}
{\bf Z} = {\bf L}{\bf F}^T + {\bf E}, \quad
e_{ij} \sim N(0, \tau_{ij}^{-1}),
\label{eq:ebmf-likelihood}
\end{equation}
in which $N(\mu, \sigma^2)$ denotes the normal distribution with mean
$\mu$ and variance $\sigma^2$, and the residual variances
$\tau_{ij}^{-1}$ may vary by row ($i$) or by column ($j$) or
both. (EBMF, and by extension cEBMF, also allows ${\bf Z}$ to contain
missing values \cite{wang_empirical_2021}, which is important in many
applications of matrix factorization, including
%
%
collaborative filtering; see Sec.~\ref{sec:movielens}.) EBMF assumes
prior distributions for elements of ${\bf L}$ and ${\bf F}$, which are
themselves estimated among pre-specified prior families $
\mathcal{G}_{\ell,k}$ and $\mathcal{G}_{f,k}$, respectively:
\begin{equation}
\begin{aligned}
\ell_{ik} &\sim g_k^{(\ell)},
\quad g_k^{(\ell)}\in \mathcal G_{\ell,k} ,
\quad k = 1, \ldots, K \\
f_{jk} &\sim g_k^{(f)},
\quad g_k^{(f)}\in \mathcal G_{f,k},
\quad k = 1, \ldots, K.
\end{aligned}
\label{eq:ebmf-prior}
\end{equation}

The flexibility of EBMF comes from the wide range of possible prior
families (including non-parametric families) \citep{ebnm}.  Different
choices of prior family correspond to different existing matrix
factorization methods. For example, if all families $\mathcal
G_{\ell,k}$ and $\mathcal G_{f,k}$ are the family of zero-mean normal
priors, then ${\bf L} {\bf F}^T$ is similar to a truncated singular
value decomposition (SVD) \citep{Nakajima, nakajima2013}. When the
prior families are all point-normal (mixture of a point mass at zero
and a zero-centered normal), one obtains empirical Bayes versions of
sparse SVD or sparse factor analysis \citep{yang_sparse_2014,
  witten_penalized_2009, Engelhardt}. The prior families can also
constrain ${\bf L}$ and ${\bf F}$; for example, families that only
contain distributions with non-negative support result in empirical
Bayes versions of NMF. In summary, EBMF
(\ref{eq:ebmf-likelihood}--\ref{eq:ebmf-prior}) is a highly flexible
modeling framework for matrix factorization that includes important
previous methods as special cases, but also many new combinations
(e.g., \cite{liu2023dissecting}).

\subsection{The cEBMF modeling framework}

In covariate-moderated EBMF (cEBMF), we assume that we have some
``side information'' (covariates) for rows and/or columns of ${\bf Z}$
\citep{anndata, adams2010incorporating}. Let $\bx_i$ denote the
available information for the $i$th row of $\bZ$, and let
$\by_j$ denote the available information for the $j$th column of
$\bZ$. In principle, $\bx_i$ and $\by_j$ can be any information
processable by a neural net (text, graph, image, other structured
data), but for simplicity we assume that this information is stored as a
matrix. Therefore, let $\bX \in {\mathbb R}^{n \times n_x}$ be a
matrix containing information on the rows of $\bZ$, with $\bx_i$
corresponding to the $i$th row of $\bX$ (e.g., $\bx_i$ might
contain the 2-d coordinate of cell $i$).
%
%
Similarly, let $\bY \in \mathbb{R}^{p \times n_y}$ contain information
on the columns of $\bZ$, with $\by_j$ corresponding to the $j$th row
of $\bY$.
%
%
In cEBMF, we incorporate this side information into the model through
{\em parameterized priors},
\begin{equation}
\begin{aligned}
\ell_{ik} &\sim g_k^{(\ell)}({\bm x}_i),\quad
g_k^{(\ell)}({\bm x}_i) \in \mathcal G_{\ell,k},  \quad k = 1, \ldots, K \\
f_{jk} &\sim g_k^{(f)}({\bm y}_j),\quad
g_k^{(f)}({\bm y}_i) \in \mathcal G_{f,k},\quad k = 1, \ldots, K,
\end{aligned}
\label{eq:cebmf-prior}
\end{equation}
where $g_k^{(\ell)}({\bm x}_i)$ is a probability distribution within the
family $\mathcal{G}_{\ell,k} $, parameterized by ${\bm x}_i$, and
$g_k^{(f)}({\bm y}_j)$ is a probability distribution within the family
$\mathcal{G}_{f,k}$ parameterized by ${\bm y}_i$.

A key limitation of many existing approaches is that they integrate
the side information using restrictive parametric models that may or
may not be appropriate for the particular application. Another
limitation is that the priors chosen for these methods may make strong
or perhaps unrealistic assumptions about the structure underlying the
data; for example, Gaussian process priors, which have been used in
matrix factorization (e.g., \cite{shang_spatially_2022,
  lawrence_hierarchical_2007, lawrence2005probabilistic}), typically
assume that the factors vary smoothly in space, which makes it
difficult to capture sharp changes at boundaries
\citep{kim2005analyzing}. Existing methods also typically rely on
hyperparameters that need to be tuned or selected (e.g., using
cross-validation).

To address these issues, we propose cEBMF, a method that:
\begin{enumerate}[leftmargin=2em]

\item Can leverage a large variety of models (e.g., multinomial
  regression, multilayer perceptron, graphical neural nets,
  convolutional neural nets) to integrate the side information into
  the prior.

\item Can use families of priors that are flexible in form and
  thus do not make strong assumptions.

\item Allows automatic selection of the hyperparameters in
  \eqref{eq:cebmf-prior} via an empirical Bayes approach.

\end{enumerate}

More formally, we fit a prior for each column $k$ of ${\bf L}$, which
maps each vector of covariates $\bx_i$ to a given element
$g_k^{(\ell)}({\bm x}_i) \in \mathcal G_{\ell,k}$, and similarly for
each column $k$ of ${\bf F}$.
%
%
%
In Sec.~\ref{sec:cebmf-algorithm}, we describe a simple yet general
algorithm that simultaneously learns the factors ${\bf L}, {\bf F}$
and the priors $g_k^{(\ell)}({\bm x}_i), g_k^{(f)}({\bm y}_j)$. A
%
%
PyTorch-based \citep{pytorch} implementation of cEBMF with several
different parameterized prior families is available at
\url{https://github.com/william-denault/cebmf_torch}.

\subsubsection{An illustration: cEBMF with
  side information on factor sparsity}
\label{sec:cebmf-spike-and-slab}

Here we illustrate the implementation of the cEBMF framework using a
simple yet broadly applicable prior family.  This prior family assumes
that the covariates ${\bf X}, {\bf Y}$ {\em only inform the pattern of
  sparsity}---that is, the placement of zeros---in ${\bf L}$ and ${\bf
  F}$. This type of prior is of particular interest for matrix
factorization because matrix factorizations are typically invariant to
rescaling, and therefore priors that inform the magnitudes of
$\ell_{ik}$ and $f_{jk}$ are difficult to design. (By ``invariant to
rescaling,'' we mean that the likelihood or objective does not change
if we replace ${\bf L}{\bf F}^T$ by $\tilde{\bf L} \tilde{\bf F}^T$,
where $\tilde{\bf L} = {\bf L}{\bf }{\bf D}, \tilde{\bf F} = {\bf
  F}{\bf D}^{-1}$, and ${\bf D}$ is an invertible diagonal matrix.)
%
%
We define this prior family as
\begin{equation}
\label{eq:parameterized-spike-and-slab}
\mathcal{G}_{\mathrm{ss}}
\colonequals \{ g : g(u) = (1 - \pi({\bm x}, {\bm\theta})) \delta_0(u) +
\pi({\bm x}, {\bm\theta} ) g_1(u; {\bm\omega}) \},
\end{equation}
in which $\delta_0(u)$ denotes the point-mass at zero, $g_1(u;
{\bm\omega})$ denotes the density of some probability distribution
$g_1({\bm\omega})$ on $u \in \mathbb{R}$, and ${\bm x} \in
\mathbb{R}^m$ denotes the covariate. For example, when $g_1$ is the
normal distribution and ${\bm\omega}$ specifies the mean and variance,
\eqref{eq:parameterized-spike-and-slab} is a family of parameterized
``spike-and-slab'' priors \citep{spike-and-slab}, and cEBMF with
$\mathcal{G}_{\ell,k} = \mathcal{G}_{\text{ss}}, \mathcal{G}_{f,k} =
\mathcal{G}_{\text{ss}}$ implements a version of sparse factor
analysis \citep{yang_sparse_2014, witten_penalized_2009, Engelhardt}
in which the sparsity of the factors is informed by the
covariates. (Note that the ``ss'' in $\mathcal{G}_{\text{ss}}$ is
short for ``spike-and-slab.'')  Alternatively, if $g_1$ is a
distribution with support only on non-negative numbers, such as an
exponential distribution, then cEBMF implements a version of sparse
NMF. The free parameters are ${\bm\theta}$, which control the weight
on the ``spike'', $\delta_0$, and ${\bm\omega}$, which control the
shape of the ``slab'', $g_1$.
%
%
One simple parameterization of $\pi({\bm x}, {\bm\theta})$ uses a
logistic regression model,
\begin{equation}
\pi({\bm x}, {\bm\theta}) = \phi\big({\textstyle \theta_0 +
\sum_{t=1}^m x_t\theta_t} \big),
\end{equation}
where $\phi(x) \colonequals 1/(1 + e^{-x})$ denotes the sigmoid
function, and ${\bm\theta} \in {\mathbb R}^{m + 1}$. Most of the
parameterized prior families used in this paper and in the cEBMF
software are variants or elaborations on $\mathcal{G}_{\mathrm{ss}}$.

%
%
%

\subsection{The cEBMF learning algorithm}
\label{sec:cebmf-algorithm}

%

A key feature of the cEBMF modeling framework is that the algorithm
for fitting the priors and estimating the factorization is simple to
describe and
%
%
often straightforward to implement.
%
%
In brief, the cEBMF learning algorithm reduces a complex model-fitting
task to a series of simpler subproblems. Each of these subproblems
involves fitting a covariate-moderated variant of an empirical Bayes
normal means (EBNM) model \citep{ebnm}.  This also has the advantage
of making the cEBMF framework and software modular so that a method
that solves a covariate-moderated EBNM problem can be ``plugged in''
to the generic cEBMF algorithm.

\subsubsection{Background: empirical Bayes normal means}
\label{sec:ebnm}

Given observations $\hat{\beta}_i \in \mathbb{R}$ with known standard
deviations $s_i > 0$, $i = 1, \dots, n$, the normal means model
\citep{Robbins51, efron1972limiting, stephens_false_2017} is
\begin{equation}
\hat{\beta}_i \overset{\text{ind.}}{\sim} N(\beta_i, s_i^2),
\label{eqn:nm_problem}
\end{equation}
in which the ``true'' means $\beta_i \in \mathbb{R}$ are unknown.
It is further assumed that the unknown means are 
\begin{equation}
\label{eqn:nm_prior}
\beta_i \overset{\text{i.i.d.}}{\sim} g \in \mathcal{G},
\end{equation}
where $\mathcal{G}$ is some pre-specified family of probability
distributions.

The empirical Bayes (EB) approach to fitting the normal means model
(\ref{eqn:nm_problem}--\ref{eqn:nm_prior}) exploits the fact that the
noisy observations $\hat{\beta}_i$ contain information not only about
the underlying means $\beta_i$ but also how the means are collectively
distributed.
%
%
EB ``borrows information'' across the observations to estimate $g$;
typically this is done by maximizing the marginal log-likelihood of
(\ref{eqn:nm_problem}--\ref{eqn:nm_prior}). The unknown means
$\beta_i$ are typically estimated by their posterior means (given the
estimate of $g$).

To adapt the EBNM model (\ref{eqn:nm_problem}--\ref{eqn:nm_prior}) to
cEBMF, we allow the prior for the $i$th unknown mean to
depend on additional data ${\bm d}_i$ and parameters ${\bm\theta}$,
\begin{equation}
\beta_i \overset{\text{ind.}}{\sim}
g({\bm d}_i,\bm\theta ) \in \mathcal{G},
\label{eq:cebnm_prior}
\end{equation}
so that each combination of $\bm\theta$ and ${\bm d}_i$ maps to an
element of $\mathcal{G}$. We refer to this as ``covariate-moderated
EBNM'' (cEBNM). Solving the cEBNM problem therefore involves two key
computations:

\paragraph{1. Estimate the model parameters.} 
 
Compute
\begin{equation}
\hat{\bm\theta} \colonequals
\argmax_{{\bm\theta} \,\in\, {\bf R}^m} \mathcal{L}({\bm\theta}),
\label{eqn:argmax_theta}
\end{equation}
where $\mathcal{L}({\bm\theta})$ denotes the marginal likelihood,
\begin{equation} 
\label{eqn:lik}
\mathcal{L}({\bm\theta}) \colonequals
p(\hat{\bm \beta} \mid {\bm s}, {\bm \theta},
{\bf D}) = 
\prod_{i=1}^n \textstyle 
\int N(\hat{\beta}_i; \beta_i, s_i^2) \,
g(\beta_i; {\bm d}_i, {\bm\theta}) \, d\beta_i,
\end{equation}
in which $\hat{\bm\beta} = (\hat{\beta}_1, \ldots, \hat{\beta}_n)$,
${\bm s} = (s_1, \ldots, s_n)$, ${\bf D}$ is a matrix storing
%
%
${\bm d}_1, \ldots, {\bm d}_n$, $N(\hat{\beta}_i; \beta_i, s_i^2)$
denotes the density of $N(\beta_i, s_i^2)$ at $\hat{\beta}_i$, and
$g(\beta_i; {\bm d}_i, {\bm\theta})$ denotes the density of $g( {\bm
  d}_i, {\bm\theta})$ at $\beta_i$.
 
\paragraph{2. Compute posterior summaries.}

Compute summaries from the posterior distributions given
the estimated prior,
%
%
\begin{equation} 
p(\beta_i \mid \hat{\beta}_i, s_i, \hat{\bm\theta}, {\bf D}) = 
\frac{N(\hat{\beta}_i; \beta_i, s_i^2) \,
  g(\beta_i; {\bm d}_i, \hat{\bm\theta})}
{\int N(\hat{\beta}_i; t, s_i^2) \, g(t; {\bm d}_i, \hat{\bm\theta}) \, dt}.
\label{eqn:ebnm_post}
\end{equation}

For many classical prior families, such as the spike-and-slab family
in Sec.~\ref{sec:cebmf-spike-and-slab}, the integrals in
\eqref{eqn:lik} and \eqref{eqn:ebnm_post} can be computed
analytically. More generally, standard numerical techniques such as
Gauss-Hermite quadrature may provide reasonably fast and accurate
solutions for prior families that do not result in closed-form
integrals since the integrals in \eqref{eqn:lik} and
\eqref{eqn:ebnm_post} are one-dimensional. As a result,
$\hat{\bm\theta}$ can often be obtained efficiently using
off-the-shelf optimization algorithms even when the chosen priors do
not result in analytical integrals.
%
%

In summary, solving the cEBNM problem consists of finding a mapping
from the known quantities $(\hat{\bm\beta}, {\bm s}, {\bf D})$ to a
tuple $(\hat{\bm\theta}, \hat{q})$, where each $({\bm
  d}_i,\hat{\bm\theta} )$ maps to an element $g({\bm d}_i,
\hat{\bm\theta}) \in \mathcal{G}$, and $\hat{q}$ is the posterior
distribution of the unknown means, $\hat{q}({\bm\beta}) \colonequals
\prod_{i=1}^n p(\beta_i \mid \hat{\beta}_i, s_i, \hat{\bm\theta}, {\bf
  D})$. To facilitate the description of the cEBMF algorithm below, we
denote this mapping as
\begin{equation} 
\mathrm{cEBNM}(\hat{\bm\beta}, {\bm s}, {\bf D}, \mathcal{G}) =
(\hat{\bm\theta}, \hat{q}).
\label{eq:cEBNM_mapping}
\end{equation}
Note that in practice the full posterior $\hat{q}({\bm\beta})$ is not
needed; the first and second posterior moments are sufficient (see
Sec.~\ref{sec:alg_sketch}). Any prior family is admissible under the
cEBMF framework so long as the mapping \eqref{eq:cEBNM_mapping} is
computable (either numerically or analytically).
%
%

\subsubsection{Algorithm}
\label{sec:alg_sketch}

Given a method for solving the cEBNM problem (Sec.~\ref{sec:ebnm}),
the cEBMF model can be fitted using a simple coordinate ascent
algorithm. In brief, the cEBMF algorithm maximizes an objective
function---the evidence lower bound (ELBO)
\citep{blei_variational_2017} under a variational approximation with
conditional independence assumptions on ${\bf L}$ and ${\bf F}$ (see
the Appendix)---by iterating over the following updates for each
factor $k = 1, \ldots, K$ until some stopping criterion is met:
\begin{enumerate}

\item Disregard the $k$th factor in $\bar{\bf R}$, the $n \times p$
  matrix expected residuals, $\bar{\bf R}^k = \bar{\bf R} +
  \bar{\bm\ell}_k \bar{\bm f}_k^T$.

\item For each $i = 1, \ldots, n$, compute the least-squares estimates
  of $\ell_{ik}$, denoted $\hat{\ell}_{ik}$, and the standard
  deviations $s_{ik}^{\ell}$ of these estimates,
\begin{align}
\hat{\ell}_{ik} &= \textstyle 
(s_{ik}^{\ell})^2 \sum_{j=1}^p \tau_{ij} \bar{r}_{ij}^k \bar{f}_{jk} \\
s_{ik}^{\ell} &= \textstyle (\sum_{j=1}^p \tau_{ij} \bar{f}_{jk}^2)^{-1/2},
\end{align}
where $\bar{f}_{jk}$ and $\bar{f}_{jk}^2$ denote, respectively, the
first and second posterior moments of $f_{jk}$.

\item Update $g_k^{(\ell)} \in \mathcal{G}_{\ell,k}$ by solving
  \eqref{eqn:argmax_theta}, in which we make the following
  substitutions in \eqref{eqn:argmax_theta}: $\hat{\beta}_i \leftarrow
  \hat{\ell}_{ik}, s_i \leftarrow s_{ik}^{\ell}, i = 1, \ldots, n,
      {\bf D} \leftarrow {\bf X}, \mathcal{G} \leftarrow
      \mathcal{G}_{\ell, k}$.

\item Making the same substitutions in \eqref{eqn:ebnm_post}, update
  the posterior means $\bar{\bm\ell}_k = (\bar{\ell}_{1k}, \ldots,
  \bar{\ell}_{nk})^T$ and posterior second moments $\bar{\bm\ell}_k^2 =
  (\bar{\ell}_{1k}^2, \ldots, \bar{\ell}_{nk}^2)^T$.
  
\item Perform updates similar to those in Steps 2--4 to update
  $\bar{\bm f}_k$, $\bar{\bm f}_k^2$ and $g_k^{(f)} \in \mathcal{G}_{f,k}$.

\item Update the matrix of expected residuals,
$\bar{\bf R} = \bar{\bf R}^k - \bar{\bm\ell}_k \bar{\bm f}_k^T$.

\end{enumerate}
These steps are iterated until some stopping criterion is met. The
algorithm must be initialized with initial estimates of $\bar{\bf L},
\bar{\bf F}$. The expected residuals are then initialized as $\bar{\bf
  R} = {\bf Z} - \bar{\bf L} \bar{\bf F}^T$. Note that to simplify
presentation we have omitted some details such as how to update the
residual variances $\tau_{ij}^{-1}$. These and other details are
provided in the Appendix.
%
%


\section{Experiments}
\label{sec:exp}

%

\subsection{Simulations}
\label{sec:simulations}

To assess the benefits of cEBMF, we compared cEBMF with other matrix
factorization methods in simulated data sets. We compared with several
methods that do not use side information, including EBMF (flashier R
package \citep{wang_empirical_2021, ebnm}), penalized matrix
decomposition (``PMD''; PMA R package \citep{witten_penalized_2009}),
and a variational autoencoder (VAE) \citep{kingma_auto-encoding_2014}
implemented in PyTorch \citep{pytorch}. We also compared with other
methods that use side information, including MFAI (mfair R package
\citep{wang_mfai_2024}), Spatial PCA \citep{shang_spatially_2022},
conditional VAE (cVAE) \cite{sohn_learning_2015}, and neural
collaborative filtering (NCF) \cite{he_neural_2017}. cVAE and NCF were
also implemented in PyTorch. Note that Spatial PCA accepts only a
specific type of side information, the 2-d coordinates of the data
points, so was not included in all the simulations.

We compared the methods in four simulation scenarios designed to
capture a range of settings where one might perform a matrix
factorization analysis, with or without side information: (1) a
``sparsity-driven covariate'' setting in which the covariates only
informed the sparsity pattern of ${\bf L}$ and ${\bf F}$; (2) an
``uninformative covariate'' setting in which the covariates provided
no information about the true matrix factorization; (3) a
``tiled-clustering'' setting in which ${\bf L}$ depended on the 2-d
location of the data points; and (4) a ``shifted tiled-clustering''
setting in which the cEBMF priors were unable recover the true data
generating process. The latter scenario was used to assess cEBMF under
model misspecification. We simulated 100 data sets in each setting,
and we assessed the ability of each method to recover the true matrix
factorization as measured by root mean squared error (RMSE) between
the true matrix factorization ${\bf L}{\bf F}^T$ and estimated matrix
factorization $\hat{\bf L} \hat{\bf F}^T$. More detailed descriptions
of the simulations and the methods compared are given in the Appendix.

\begin{wrapfigure}{r}{0.65\textwidth}
\vspace*{-1em}
\includegraphics[width=0.65\textwidth]{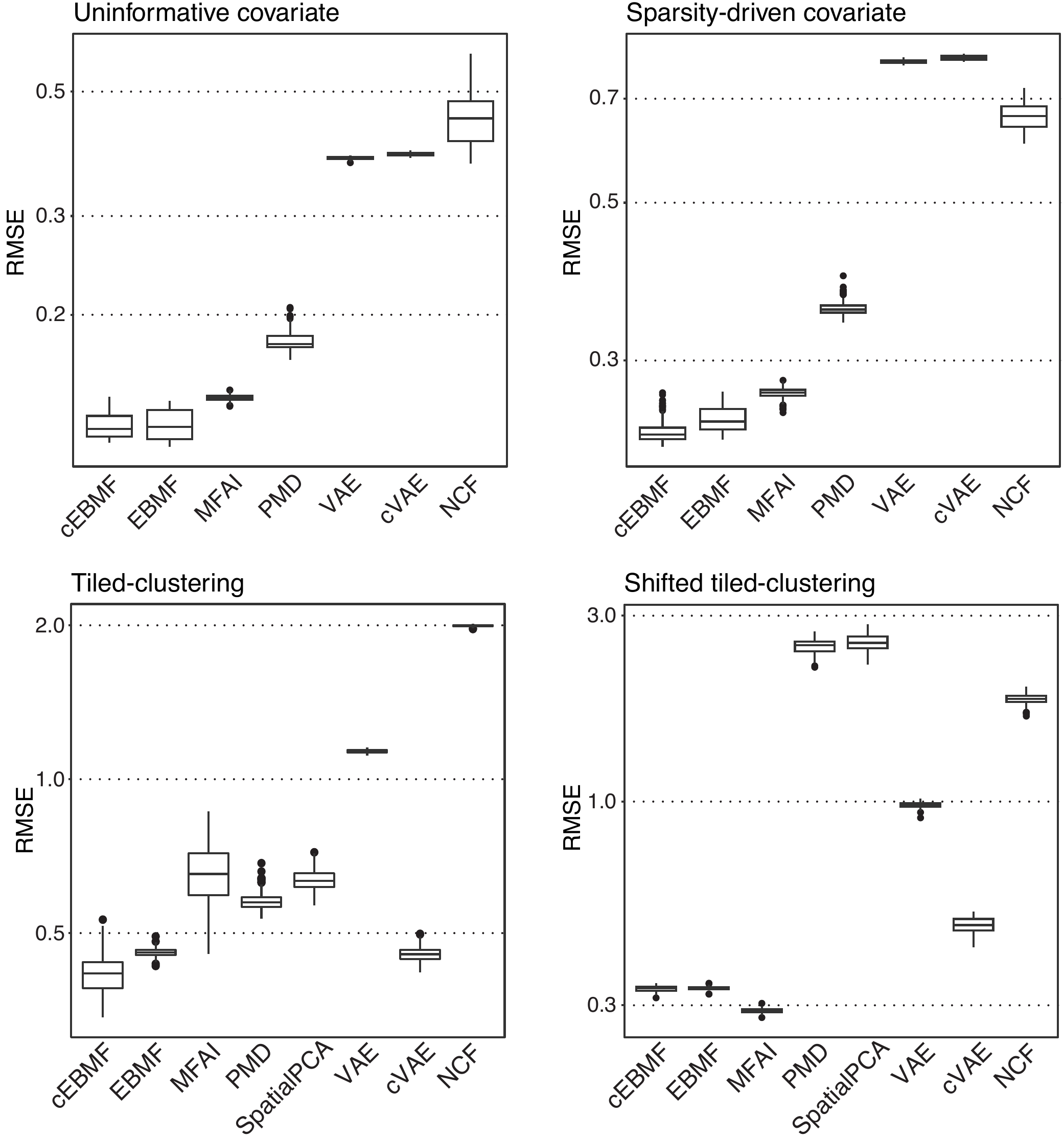}
\vspace*{-1.2em}
\caption{Performance of the different matrix factorization methods in
  the simulated data sets. Each boxplot summarizes the root mean
  squared errors (RMSEs) across 100 simulations in that scenario.
  (Lower RMSEs are better.) See Figures
  \ref{fig:uninformative-sims}--\ref{fig:shifted-tiled-sims} in the
  Appendix for additional results from the simulations. Note
  Fig.~\ref{fig:toy_example} shows results from one of the
  tiled-clustering simulations in detail.}
\label{fig:simulation_res1}
\vspace*{-1.25ex}
\end{wrapfigure}

The results are summarized in Fig.~\ref{fig:simulation_res1}. cEBMF
was generally more accurate than the other methods, particularly when
the covariates were informative; cEBMF achieved the greatest gains
over EBMF in the tiled-clustering setting where the covariates were
also the most informative. Reassuringly, cEBMF did not perform worse
than EBMF in settings with an uninformative covariate or a prior that
was misspecified (``shifted tiled-clustering''). The deep learning
approaches (VAE, cVAE, NCF) generally performed worse than the other
methods. cVAE sometimes outperformed VAE when the side information was
highly informative, such as in the tiled-clustering scenario, but did
not provide improvements over VAE in the more challenging
sparsity-driven scenario. We also ran Spatial PCA on the
tiled-clustering and shifted tiled-clustering data sets where the
factors were partly driven by the 2-d locations of the data
points. Despite the fact that Spatial PCA can exploit the side
information, it had worse accuracy than EBMF which did not use the
side information. This may be because Spatial PCA makes assumptions
(e.g., orthogonal factors) that were not met by our simulations.  MFAI
generally performed worse than EBMF and cEBMF except in the shifted
tiled-clustering setting; MFAI is a much less flexible model than
cEBMF and therefore its performance was sensitive to the
appropriateness of its modeling assumptions. (All models were
misspecified in the shifted tiled-clustering setting, but perhaps MFAI
was the least misspecified.) Additional results including comparisons
with other methods (PCA/SVD, Sparse SVD \cite{yang_sparse_2014}, CMF
\citep{singh_relational_2008}) are in the Appendix.


\subsection{Collaborative filtering}
\label{sec:movielens}

\begin{figure}[t]
\centering 
\includegraphics[width=0.9\textwidth]{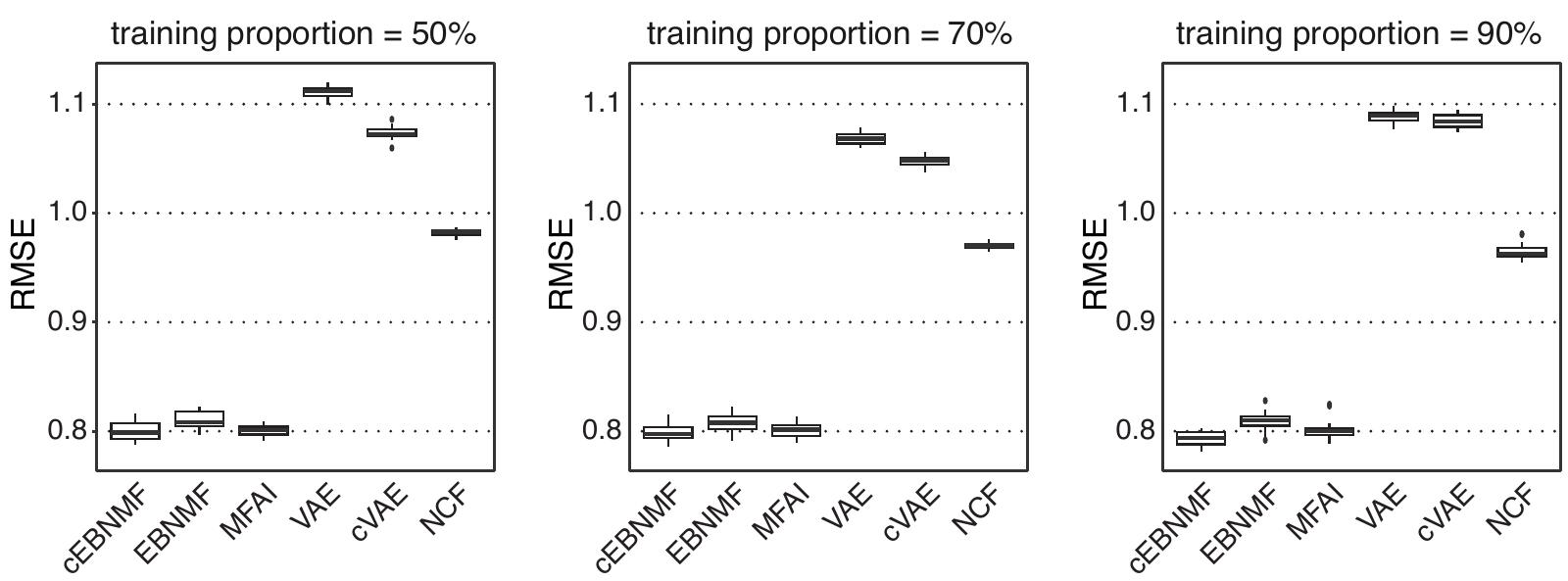}
\vspace*{-1ex}
\caption{Prediction performance of different matrix factorization and
  deep learning methods in the MovieLens 100K data
  \citep{harper_movielens_2015}.
  %
  %
  Training proportion = $X\%$ means that $X\%$ of the movie ratings were
  used in training, and the remaining $(100 - X)\%$ were used
  to evaluate accuracy (measured using RMSE). The results at each
  training proportion are from 10 random training-test splits.}
\label{fig:movielnes_res}
\end{figure}

To provide a quantitative assessment of the matrix factorization
methods in real data, we ran the same methods on the MovieLens 100K
data \citep{harper_movielens_2015}, a standard collaborative filtering
benchmark in which the goal is to predict the unobserved elements of
the matrix.
%
%
Here, ${\bf Z}$ was a $\mbox{1,682} \times 943$ matrix containing
integer-valued movie ratings, with rows corresponding to movies and
columns corresponding to users. Since most (93\%) of the movie ratings
were missing, this example highlights the ability of these methods
to handle missing data (unlike most NMF methods).  The side
information ${\bf X}$ was a $\mbox{1,682} \times 19$ binary matrix
containing information about the movie's genre (comedy, adventure,
etc).  We held out some of the moving ratings at random, and used
these held-out ratings as a test set.
%
%
%

We ran EBMF and cEBMF so as to produce non-negative matrix
factorizations, which is common in collaborative filtering (e.g.,
\citep{singh_relational_2008}). Therefore, in the results we labeled
these methods as ``EBNMF'' and ``cEBNMF''. (Note that MFAI cannot
produce non-negative matrix factorizations.) To enforce non-negativity
in ${\bf L}$ and ${\bf F}$, we used mixture-of-exponentials
priors. For cEBNMF, the side information was incorporated into the
priors on ${\bf L}$ using a multi-layer perceptron. (See the Appendix
for further details.)
%
%
Since we didn't know the true number of factors, $K$ was chosen
adaptively in EBMF, cEBMF and MFAI. (We set an upper limit of 7 for
EBMF and cEBMF, and 12 for MFAI.)

The results are summarized in Fig.~\ref{fig:movielnes_res}.  Both MFAI
and cEBNMF were able to use the side information (the movie genres) to
improve over EBNMF, and all three matrix factorization methods were
more accurate than the deep learning methods. We conjecture that the
deep learning methods would have performed better with more data
(such as the more recently released MovieLens data sets that are much
larger). On the MovieLens 100K data, cEBNMF yielded overall the best
prediction accuracy across the different training-set splits.




\subsection{Spatial transcriptomics}
\label{sec:spatial-transcriptomics}

\begin{figure}[t]
\centering 
\includegraphics[width=0.915\textwidth]{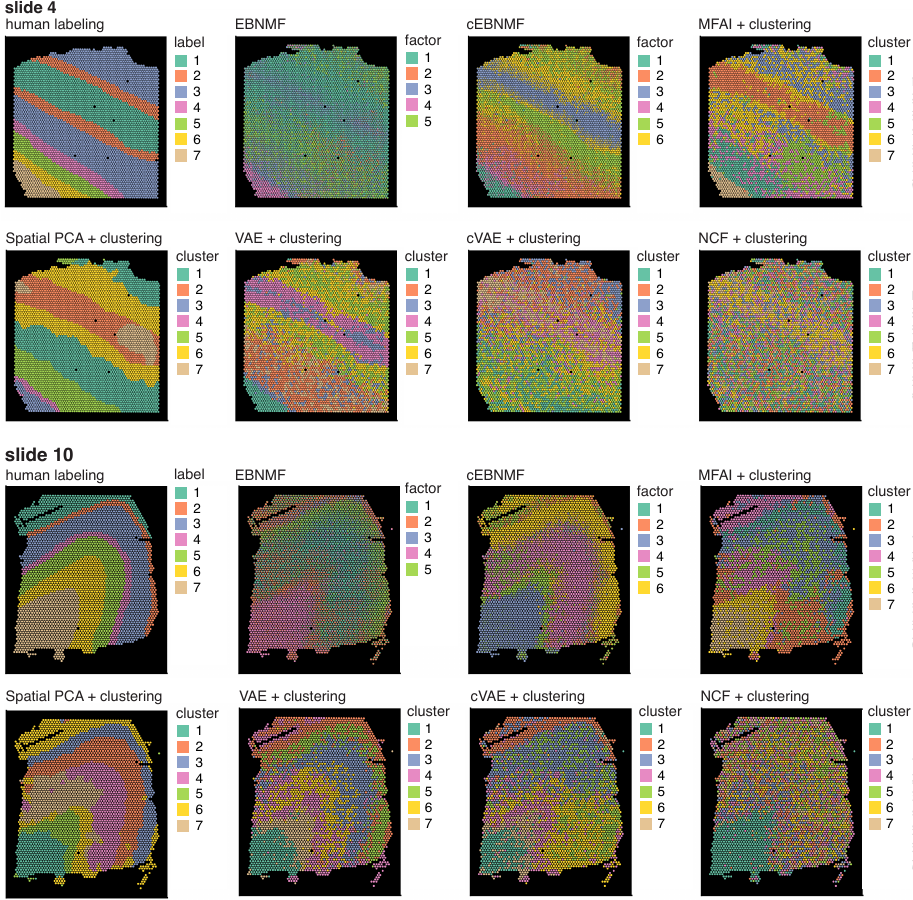}
\caption{Results on slides 4 (top row) and 10 (bottom row) of the
  DLPFC spatial transcriptomics data \citep{pardo_spatiallibd_2022,
    maynard_transcriptome-scale_2021}. For the NMF, EBMF and cEBMF
  results, each data point (``pixel'') $i$ is shown as a pie chart
  using the relative values of $i$th row of ${\bf L}$ (after
  performing an ``LDA-style'' post-processing of ${\bf L}, {\bf F}$
  \citep{townes_nonnegative_2023}). (Higher-resolution versions of
  these images are available online at
  \url{https://github.com/william-denault/cEBMF_experiment}.) Since
  MFAI and other methods did not produce a non-negative matrix
  factorization, we clustered the low-dimensional embeddings using the
  same approach that was used in Spatial PCA
  \citep{shang_spatially_2022}.
  %
  %
  CMF results for these two slices and additional results on
  all 12 slices are given in the Appendix.}
  %
  %
\label{fig:spatial_soft_clustering}
\end{figure}

Although cEBMF was not specifically designed for spatial data, here we
show that cEBMF also yields compelling results from spatial
transcriptomics data \citep{marx_method_2021} by exploiting the side
information, the spatial locations of the data points. We illustrate
this using a data set \citep{pardo_spatiallibd_2022,
  maynard_transcriptome-scale_2021} that has been annotated by domain
experts and has been used in several papers to benchmark methods for
spatial transcriptomics (e.g., \citep{shang_spatially_2022,
  varrone_cellcharter_2024, zhao_spatial_2021, zhu_srtsim_2023}). The
data were collected from 12 slices of the human dorsolateral
prefrontal cortex (DLPFC). After data preprocessing, each slice
contained about 4,000 pixels and expression measured in about 5,000
genes ($n \approx \mbox{4000}$, $p \approx \mbox{5,000}$).\footnote{We
used the data prepared by the authors of the SpatialLIBD package
\citep{pardo_spatiallibd_2022} which were made available for download
at \url{https://research.libd.org/spatialLIBD/}.}

In this application, our aim was to generate a ``parts-based''
representation of the data, with the hopes that the ``parts'' would
resolve to biologically interpretable units (e.g., cell types, tissue
regions, gene programs) \citep{townes_nonnegative_2023, gomde,
  dey_visualizing_2017}. This is a fundamentally different aim from
the previous examples: in the previous examples, the goal was to
generate accurate low-dimensional representations, but we did not ask
whether the {\em individual dimensions} were accurate or
interpretable. With this aim in mind, we ran cEBMF so as to produce
non-negative matrix factorizations (``cEBNMF'') using the same priors
that were used for the MovieLens data. We compared to two other
non-negative matrix factorizations that did not leverage the side
information: NMF implemented in the R package NNLM
\citep{lin_optimization_2020}, and EBMF with point-exponential priors
(``EBNMF''). (The point-exponential prior is a simplification of the
mixture-of-exponentials prior with a single exponential component in
which the rate parameter is also learned.). We also compared to
several of the methods that were considered in the previous
experiments, including methods such as Spatial PCA and cVAE that make
use of the side information, and others that do not.

Spatial PCA deserves special mention because it was specifically
designed for spatial transcriptomics data
\citep{shang_spatially_2022}. Although Spatial PCA does not produce a
parts-based decomposition, the Spatial PCA software automatically
clusters the data points after projection onto the principal
components (PCs), and this clustering can be compared to the
non-negative matrix factorizations. Following
\cite{shang_spatially_2022}, we computed the top 20 PCs, then we ran
the walk-trap clustering algorithm \citep{pons_computing_2005} on the
PCs. Additional details of the Spatial PCA analysis and the other
methods are given in the Appendix.
%
%

Figure \ref{fig:spatial_soft_clustering} shows results on two of the
slices, with additional results on all 12 slices provided in the
Appendix. The manual annotations on the left-hand side should be
viewed as a useful reference point, but not necessarily the ``ground
truth''. (Consider that the data-driven annotations might identify
previously unknown cellular structures.) EBNMF, cEBNMF and MFAI
adapted the number of factors $K$ to the data (with upper limits of
50, 20 and 9, respectively). For the other methods, the number of
clusters was set to match the manual annotation. Qualitatively, some
of the factors from NMF and EBMF seem to correspond to the
expert-labeled regions, but several other factors appear to be
capturing other substructures that have no obvious spatial
quality. Comparatively, the cEBNMF results in slices 4 and 9 capture
the expert labeling much more closely, with most factors showing a
clear spatial quality. The clusters from Spatial PCA, MFAI and VAE
also capture spatial structure and expert labeling well, but with some
notable exceptions, e.g., Spatial PCA cluster 7 in slices 4 and
9. (The Spatial PCA software performed additional post-processing on
the clusters which is why these clusters look less ``noisy'' than the
others.) cVAE, despite using the side information, did not seem to
improve over VAE. The CMF results were comparatively poor
(Fig.~\ref{fig:sprna_mfai_cmf} in the Appendix), reflecting the
inappropriateness of the CMF assumptions in this setting. Note that
the NMF methods can capture continuous variation in expression within
and across cell types or regions---as well as the expectation that
some pixels might reflect combinations of cellular
structures---whereas the clustering cannot.

\subsection{Scalability benchmark}
\label{sec:benchmark}

\begin{table}[t]
\centering
\begin{tabular}{llrrrr}
\toprule
& & \multicolumn{4}{c}{number of rows ($n$)} \\
\cmidrule(lr){3-6} 
method & software & $10^3$ & $10^4$ & $10^5$ & $10^6$ \\
\midrule
EBMF & flashier \citep{wang_empirical_2021, ebnm}    
& 0.8 & 2.5 & 36.9 & 165.1 \\
cEBMF & {\tt cebmf\_torch}* & 5.2 & 35.5 & 416.3 & 3,403.6 \\
MFAI & mfair \citep{wang_mfai_2024}
 & 45.4 & 251.3 & 11,293.2 & -- \\
Spatial PCA & SpatialPCA \citep{shang_spatially_2022} & 
234.8 & 8,213.7 & -- & -- \\
\bottomrule
\end{tabular}
\vspace*{1ex}
\caption{Running times of matrix factorization methods on data sets in
  which we varied $n$, the number of rows in ${\bf X}$ and ${\bf Z}$.
  The numbers in the table are the average running times (in seconds)
  from 10 simulated data sets.  *Available at
  \url{https://github.com/william-denault/cebmf_torch}.}
\label{tab:runtime_comparison}
\end{table}

cEBMF can also handle much larger data sets than the MovieLens and
DLPFC data sets considered above. (One reason we did not use larger
data sets was to allow for comparison with methods that do not scale
well to large data sets.) To illustrate this, we ran EBMF and cEBMF on
``tiled-clustering'' data sets (using the same priors described
Sec. \ref{sec:simulations}), in which the data sets were simulated
with different numbers of rows, $n$. We compared the running times
with two other matrix factorization methods that make use of the side
information, MFAI and Spatial PCA (Table
\ref{tab:runtime_comparison}). While cEBMF had considerably higher
running times than EBMF on the same data, cEBMF completed in much less
time on average than MFAI and Spatial PCA. Further, while cEBMF was
able to handle data sets with 1 million rows, Spatial PCA struggled to
analyze data sets with 100,000 or more rows due to its high memory
usage; for example, Spatial PCA needed approximately 300 GB of memory
for $n = \mbox{100,000}$. MFAI crashed frequently in data sets with $n
\geq \mbox{100,000 rows}$ (only 2 of 10 of the runs completed at $n =
\mbox{100,000}$). Note this benchmark was performed on a computer with
32 GB memory, an NVIDIA GeForce RTX\textsuperscript{TM} 4070 GPU and
an AMD Ryzen\textsuperscript{TM} 9 7940HS CPU (8 cores, 16
threads). The EBMF and cEBMF algorithms were run for at most 20
iterations. See also the Appendix where we describe some of the
computational properties of the cEBMF algorithmic framework.

\section{Conclusions}

We have introduced cEBMF, a general and flexible framework for matrix
factorization that (i) incorporates side information through flexible
covariate-dependent priors and (ii) learns these priors from the data
using empirical Bayes ideas. Considerable effort has gone into
optimizing the software implementation building on our previous work
on this topic \citep{wang_empirical_2021, ebnm}. As a result, cEBMF
scales well to large data sets with, say, hundreds of thousands or
millions of rows and/or columns. Our experiments highlight the
importance of using matrix factorization models that make appropriate
assumptions about the data or are sufficiently flexible to adapt to
the data. In our experiments, cEBMF performed competitively against
other matrix factorization methods and deep learning approaches that
make use of the side information. Because the priors in cEBMF can take
the form of virtually any probabilistic model optimized via
equations~(\ref{eqn:argmax_theta}--\ref{eqn:lik}), our framework opens
the door to incorporating other types of side information, including
images and graphs.
  
{\bf Note:} R and Python code implementing the experiments is
available at \url{https://github.com/william-denault/cEBMF_experiment},
and a PyTorch-based implementation of cEBMF is available at at
\url{https://github.com/william-denault/cebmf_torch}.
%
%

\section*{Acknowledgments}

We thank the staff at the Research Computing Center at the University
of Chicago for providing the high-performance computing resources used
to implement the numerical experiments.  We also thank Deaglan
Bartlett and Augustin Marignier for their help in developing the
PyTorch implementation.  This work was supported in part by NIH grant
R01HG002585 (to M.S.)  and Eric and Wendy Schmidt AI in Science
Postdoctoral Fellowship, a Schmidt Sciences, LLC program (to
W.R.P.D.). Additional support came from the University of Chicago Data
Science Institute through the 2024 AI+Science Research Initiative. We
also thank the reviewers for their feedback.

\setlength{\bibsep}{4pt plus 1pt}
\bibliography{library}
\bibliographystyle{unsrtnat}

%
%
\clearpage

\section*{NeurIPS Paper Checklist}
\begin{enumerate}

\item {\bf Claims}

\item[] Question: Do the main claims made in the abstract and introduction accurately reflect the paper's contributions and scope?
    \item[] Answer: \answerYes{}
    \item[] Justification: The claims made in the abstract and introduction are supported by (i) a qualitative comparison of cEBMF to related work and (ii) empirical assessments in a variety of data sets.
    \item[] Guidelines:
    \begin{itemize}
        \item The answer NA means that the abstract and introduction do not include the claims made in the paper.
        \item The abstract and/or introduction should clearly state the claims made, including the contributions made in the paper and important assumptions and limitations. A No or NA answer to this question will not be perceived well by the reviewers. 
        \item The claims made should match theoretical and experimental results, and reflect how much the results can be expected to generalize to other settings. 
        \item It is fine to include aspirational goals as motivation as long as it is clear that these goals are not attained by the paper. 
    \end{itemize}

\item {\bf Limitations}
    \item[] Question: Does the paper discuss the limitations of the work performed by the authors?
    \item[] Answer: \answerYes{}
    \item[] Justification: We empirically assessed the performance of cEBMF in the situation where the prior was misspecified.
    \item[] Guidelines:
    \begin{itemize}
        \item The answer NA means that the paper has no limitation while the answer No means that the paper has limitations, but those are not discussed in the paper. 
        \item The authors are encouraged to create a separate "Limitations" section in their paper.
        \item The paper should point out any strong assumptions and how robust the results are to violations of these assumptions (e.g., independence assumptions, noiseless settings, model well-specification, asymptotic approximations only holding locally). The authors should reflect on how these assumptions might be violated in practice and what the implications would be.
        \item The authors should reflect on the scope of the claims made, e.g., if the approach was only tested on a few datasets or with a few runs. In general, empirical results often depend on implicit assumptions, which should be articulated.
        \item The authors should reflect on the factors that influence the performance of the approach. For example, a facial recognition algorithm may perform poorly when image resolution is low or images are taken in low lighting. Or a speech-to-text system might not be used reliably to provide closed captions for online lectures because it fails to handle technical jargon.
        \item The authors should discuss the computational efficiency of the proposed algorithms and how they scale with dataset size.
        \item If applicable, the authors should discuss possible limitations of their approach to address problems of privacy and fairness.
        \item While the authors might fear that complete honesty about limitations might be used by reviewers as grounds for rejection, a worse outcome might be that reviewers discover limitations that aren't acknowledged in the paper. The authors should use their best judgment and recognize that individual actions in favor of transparency play an important role in developing norms that preserve the integrity of the community. Reviewers will be specifically instructed to not penalize honesty concerning limitations.
    \end{itemize}

\item {\bf Theory assumptions and proofs}
    \item[] Question: For each theoretical result, does the paper provide the full set of assumptions and a complete (and correct) proof?
    \item[] Answer: \answerYes{}
    \item[] Justification: The key theoretical results are found in the Appendix. We have provided proofs of Proposition \ref{prop_connection} and Lemma \ref{lemma:cebnm}, and we have clearly stated the assumptions.
    \item[] Guidelines:
    \begin{itemize}
        \item The answer NA means that the paper does not include theoretical results. 
        \item All the theorems, formulas, and proofs in the paper should be numbered and cross-referenced.
        \item All assumptions should be clearly stated or referenced in the statement of any theorems.
        \item The proofs can either appear in the main paper or the supplemental material, but if they appear in the supplemental material, the authors are encouraged to provide a short proof sketch to provide intuition. 
        \item Inversely, any informal proof provided in the core of the paper should be complemented by formal proofs provided in appendix or supplemental material.
        \item Theorems and Lemmas that the proof relies upon should be properly referenced. 
    \end{itemize}

    \item {\bf Experimental result reproducibility}
    \item[] Question: Does the paper fully disclose all the information needed to reproduce the main experimental results of the paper to the extent that it affects the main claims and/or conclusions of the paper (regardless of whether the code and data are provided or not)?
    \item[] Answer: \answerYes{}
    \item[] Justification: We have provided detailed descriptions of the experiments, including the software used. We have also provided the code that was used to generate all the results in the paper.
    \item[] Guidelines:
    \begin{itemize}
        \item The answer NA means that the paper does not include experiments.
        \item If the paper includes experiments, a No answer to this question will not be perceived well by the reviewers: Making the paper reproducible is important, regardless of whether the code and data are provided or not.
        \item If the contribution is a dataset and/or model, the authors should describe the steps taken to make their results reproducible or verifiable. 
        \item Depending on the contribution, reproducibility can be accomplished in various ways. For example, if the contribution is a novel architecture, describing the architecture fully might suffice, or if the contribution is a specific model and empirical evaluation, it may be necessary to either make it possible for others to replicate the model with the same dataset, or provide access to the model. In general. releasing code and data is often one good way to accomplish this, but reproducibility can also be provided via detailed instructions for how to replicate the results, access to a hosted model (e.g., in the case of a large language model), releasing of a model checkpoint, or other means that are appropriate to the research performed.
        \item While NeurIPS does not require releasing code, the conference does require all submissions to provide some reasonable avenue for reproducibility, which may depend on the nature of the contribution. For example
        \begin{enumerate}
            \item If the contribution is primarily a new algorithm, the paper should make it clear how to reproduce that algorithm.
            \item If the contribution is primarily a new model architecture, the paper should describe the architecture clearly and fully.
            \item If the contribution is a new model (e.g., a large language model), then there should either be a way to access this model for reproducing the results or a way to reproduce the model (e.g., with an open-source dataset or instructions for how to construct the dataset).
            \item We recognize that reproducibility may be tricky in some cases, in which case authors are welcome to describe the particular way they provide for reproducibility. In the case of closed-source models, it may be that access to the model is limited in some way (e.g., to registered users), but it should be possible for other researchers to have some path to reproducing or verifying the results.
        \end{enumerate}
    \end{itemize}

\item {\bf Open access to data and code}
    \item[] Question: Does the paper provide open access to the data and code, with sufficient instructions to faithfully reproduce the main experimental results, as described in supplemental material?
    \item[] Answer: \answerYes{}
    \item[] Justification: We have provided the R and Python code implementing the methods and experiments. The data sets were either included or instructions for accessing the data sets were given.
    \item[] Guidelines:
    \begin{itemize}
        \item The answer NA means that paper does not include experiments requiring code.
        \item Please see the NeurIPS code and data submission guidelines (\url{https://nips.cc/public/guides/CodeSubmissionPolicy}) for more details.
        \item While we encourage the release of code and data, we understand that this might not be possible, so “No” is an acceptable answer. Papers cannot be rejected simply for not including code, unless this is central to the contribution (e.g., for a new open-source benchmark).
        \item The instructions should contain the exact command and environment needed to run to reproduce the results. See the NeurIPS code and data submission guidelines (\url{https://nips.cc/public/guides/CodeSubmissionPolicy}) for more details.
        \item The authors should provide instructions on data access and preparation, including how to access the raw data, preprocessed data, intermediate data, and generated data, etc.
        \item The authors should provide scripts to reproduce all experimental results for the new proposed method and baselines. If only a subset of experiments are reproducible, they should state which ones are omitted from the script and why.
        \item At submission time, to preserve anonymity, the authors should release anonymized versions (if applicable).
        \item Providing as much information as possible in supplemental material (appended to the paper) is recommended, but including URLs to data and code is permitted.
    \end{itemize}

\item {\bf Experimental setting/details}
    \item[] Question: Does the paper specify all the training and test details (e.g., data splits, hyperparameters, how they were chosen, type of optimizer, etc.) necessary to understand the results?
    \item[] Answer: \answerYes{}
    \item[] Justification: Many of these details are given in the Appendix. Additional implementation details can
    \item[] Guidelines:
    \begin{itemize}
        \item The answer NA means that the paper does not include experiments.
        \item The experimental setting should be presented in the core of the paper to a level of detail that is necessary to appreciate the results and make sense of them.
        \item The full details can be provided either with the code, in appendix, or as supplemental material.
    \end{itemize}

\item {\bf Experiment statistical significance}
    \item[] Question: Does the paper report error bars suitably and correctly defined or other appropriate information about the statistical significance of the experiments?
    \item[] Answer: \answerYes{}
    \item[] Justification: We summarize some results using boxplots with the conventional definitions for whiskers, box bounds, center line and outliers.
    \item[] Guidelines:
    \begin{itemize}
        \item The answer NA means that the paper does not include experiments.
        \item The authors should answer "Yes" if the results are accompanied by error bars, confidence intervals, or statistical significance tests, at least for the experiments that support the main claims of the paper.
        \item The factors of variability that the error bars are capturing should be clearly stated (for example, train/test split, initialization, random drawing of some parameter, or overall run with given experimental conditions).
        \item The method for calculating the error bars should be explained (closed form formula, call to a library function, bootstrap, etc.)
        \item The assumptions made should be given (e.g., Normally distributed errors).
        \item It should be clear whether the error bar is the standard deviation or the standard error of the mean.
        \item It is OK to report 1-sigma error bars, but one should state it. The authors should preferably report a 2-sigma error bar than state that they have a 96\% CI, if the hypothesis of Normality of errors is not verified.
        \item For asymmetric distributions, the authors should be careful not to show in tables or figures symmetric error bars that would yield results that are out of range (e.g. negative error rates).
        \item If error bars are reported in tables or plots, The authors should explain in the text how they were calculated and reference the corresponding figures or tables in the text.
    \end{itemize}

\item {\bf Experiments compute resources}
    \item[] Question: For each experiment, does the paper provide sufficient information on the computer resources (type of compute workers, memory, time of execution) needed to reproduce the experiments?
    \item[] Answer: \answerYes{}
    \item[] Justification: Details of the computing hardware used in the scability benchmark are given in Sec. \ref{sec:benchmark}.
    \item[] Guidelines:
    \begin{itemize}
        \item The answer NA means that the paper does not include experiments.
        \item The paper should indicate the type of compute workers CPU or GPU, internal cluster, or cloud provider, including relevant memory and storage.
        \item The paper should provide the amount of compute required for each of the individual experimental runs as well as estimate the total compute. 
        \item The paper should disclose whether the full research project required more compute than the experiments reported in the paper (e.g., preliminary or failed experiments that didn't make it into the paper). 
    \end{itemize}
    
\item {\bf Code of ethics}
    \item[] Question: Does the research conducted in the paper conform, in every respect, with the NeurIPS Code of Ethics \url{https://neurips.cc/public/EthicsGuidelines}?
    \item[] Answer: \answerYes{}
    \item[] Justification: We have reviewed the NeurIPS Code of Ethics and our work abides by it to the best of our knowledge.
    \item[] Guidelines:
    \begin{itemize}
        \item The answer NA means that the authors have not reviewed the NeurIPS Code of Ethics.
        \item If the authors answer No, they should explain the special circumstances that require a deviation from the Code of Ethics.
        \item The authors should make sure to preserve anonymity (e.g., if there is a special consideration due to laws or regulations in their jurisdiction).
    \end{itemize}

\item {\bf Broader impacts}
    \item[] Question: Does the paper discuss both potential positive societal impacts and negative societal impacts of the work performed?
    \item[] Answer: \answerNA{}
    \item[] Justification: We don't believe that our work has any obvious 
direct social impacts.
    \item[] Guidelines:
    \begin{itemize}
        \item The answer NA means that there is no societal impact of the work performed.
        \item If the authors answer NA or No, they should explain why their work has no societal impact or why the paper does not address societal impact.
        \item Examples of negative societal impacts include potential malicious or unintended uses (e.g., disinformation, generating fake profiles, surveillance), fairness considerations (e.g., deployment of technologies that could make decisions that unfairly impact specific groups), privacy considerations, and security considerations.
        \item The conference expects that many papers will be foundational research and not tied to particular applications, let alone deployments. However, if there is a direct path to any negative applications, the authors should point it out. For example, it is legitimate to point out that an improvement in the quality of generative models could be used to generate deepfakes for disinformation. On the other hand, it is not needed to point out that a generic algorithm for optimizing neural networks could enable people to train models that generate Deepfakes faster.
        \item The authors should consider possible harms that could arise when the technology is being used as intended and functioning correctly, harms that could arise when the technology is being used as intended but gives incorrect results, and harms following from (intentional or unintentional) misuse of the technology.
        \item If there are negative societal impacts, the authors could also discuss possible mitigation strategies (e.g., gated release of models, providing defenses in addition to attacks, mechanisms for monitoring misuse, mechanisms to monitor how a system learns from feedback over time, improving the efficiency and accessibility of ML).
    \end{itemize}
    
\item {\bf Safeguards}
    \item[] Question: Does the paper describe safeguards that have been put in place for responsible release of data or models that have a high risk for misuse (e.g., pretrained language models, image generators, or scraped datasets)?
    \item[] Answer: \answerNA{}
    \item[] Justification: This work did not release data or models that would be considered to have a strong potential for misuse.
    \item[] Guidelines:
    \begin{itemize}
        \item The answer NA means that the paper poses no such risks.
        \item Released models that have a high risk for misuse or dual-use should be released with necessary safeguards to allow for controlled use of the model, for example by requiring that users adhere to usage guidelines or restrictions to access the model or implementing safety filters. 
        \item Datasets that have been scraped from the Internet could pose safety risks. The authors should describe how they avoided releasing unsafe images.
        \item We recognize that providing effective safeguards is challenging, and many papers do not require this, but we encourage authors to take this into account and make a best faith effort.
    \end{itemize}

\item {\bf Licenses for existing assets}
    \item[] Question: Are the creators or original owners of assets (e.g., code, data, models), used in the paper, properly credited and are the license and terms of use explicitly mentioned and properly respected?
    \item[] Answer: \answerYes{}
    \item[] Justification: All the original data sources are credited/cited.
    \item[] Guidelines:
    \begin{itemize}
        \item The answer NA means that the paper does not use existing assets.
        \item The authors should cite the original paper that produced the code package or dataset.
        \item The authors should state which version of the asset is used and, if possible, include a URL.
        \item The name of the license (e.g., CC-BY 4.0) should be included for each asset.
        \item For scraped data from a particular source (e.g., website), the copyright and terms of service of that source should be provided.
        \item If assets are released, the license, copyright information, and terms of use in the package should be provided. For popular datasets, \url{paperswithcode.com/datasets} has curated licenses for some datasets. Their licensing guide can help determine the license of a dataset.
        \item For existing datasets that are re-packaged, both the original license and the license of the derived asset (if it has changed) should be provided.
        \item If this information is not available online, the authors are encouraged to reach out to the asset's creators.
    \end{itemize}

\item {\bf New assets}
    \item[] Question: Are new assets introduced in the paper well documented and is the documentation provided alongside the assets?
    \item[] Answer: \answerYes{}
    \item[] Justification: We provide the source code for our new methods, and it is accompanied with detailed documentation.
    \item[] Guidelines:
    \begin{itemize}
        \item The answer NA means that the paper does not release new assets.
        \item Researchers should communicate the details of the dataset/code/model as part of their submissions via structured templates. This includes details about training, license, limitations, etc. 
        \item The paper should discuss whether and how consent was obtained from people whose asset is used.
        \item At submission time, remember to anonymize your assets (if applicable). You can either create an anonymized URL or include an anonymized zip file.
    \end{itemize}

\item {\bf Crowdsourcing and research with human subjects}
    \item[] Question: For crowdsourcing experiments and research with human subjects, does the paper include the full text of instructions given to participants and screenshots, if applicable, as well as details about compensation (if any)? 
    \item[] Answer: \answerNA{}
    \item[] Justification: This work does not involve crowdsourcing or
research with human subjects.
    \item[] Guidelines:
    \begin{itemize}
        \item The answer NA means that the paper does not involve crowdsourcing nor research with human subjects.
        \item Including this information in the supplemental material is fine, but if the main contribution of the paper involves human subjects, then as much detail as possible should be included in the main paper. 
        \item According to the NeurIPS Code of Ethics, workers involved in data collection, curation, or other labor should be paid at least the minimum wage in the country of the data collector. 
    \end{itemize}

\item {\bf Institutional review board (IRB) approvals or equivalent for research with human subjects}
    \item[] Question: Does the paper describe potential risks incurred by study participants, whether such risks were disclosed to the subjects, and whether Institutional Review Board (IRB) approvals (or an equivalent approval/review based on the requirements of your country or institution) were obtained?
    \item[] Answer: \answerNA{}
    \item[] Justification: this work does not involve research with human subjects.
    \item[] Guidelines:
    \begin{itemize}
        \item The answer NA means that the paper does not involve crowdsourcing nor research with human subjects.
        \item Depending on the country in which research is conducted, IRB approval (or equivalent) may be required for any human subjects research. If you obtained IRB approval, you should clearly state this in the paper. 
        \item We recognize that the procedures for this may vary significantly between institutions and locations, and we expect authors to adhere to the NeurIPS Code of Ethics and the guidelines for their institution. 
        \item For initial submissions, do not include any information that would break anonymity (if applicable), such as the institution conducting the review.
    \end{itemize}

\item {\bf Declaration of LLM usage}
    \item[] Question: Does the paper describe the usage of LLMs if it is an important, original, or non-standard component of the core methods in this research? Note that if the LLM is used only for writing, editing, or formatting purposes and does not impact the core methodology, scientific rigorousness, or originality of the research, declaration is not required.
    \item[] Answer: \answerNA{} 
    \item[] Justification: LLMs were not were not used in this work.
    \item[] Guidelines:
    \begin{itemize}
        \item The answer NA means that the core method development in this research does not involve LLMs as any important, original, or non-standard components.
        \item Please refer to our LLM policy (\url{https://neurips.cc/Conferences/2025/LLM}) for what should or should not be described.
    \end{itemize}

\end{enumerate}

\clearpage

\appendix

{\Large\bf Appendices}

\section{Derivations and additional method details}
\label{subsec:connection_EBMF_EBMN} 

The cEBMF algorithm (Sec.~\ref{sec:alg_sketch}) is a {\em variational
empirical Bayes} algorithm
%
\citep{blei_latent_2003, saul-1996, ghahramani-2000,
  van_de_wiel_learning_2019} that is formulated as solving the
following optimization problem:
\begin{equation}
\argmax_{q, \, g, \, {\bm\tau}} 
\mathrm{ELBO}(q, g, {\bm\tau}),
\end{equation}
where $g$ is shorthand for the priors $g_1^{(\ell)}, \ldots,
g_K^{(\ell)}, g_1^{(f)}, \ldots, g_K^{(f)}$, $q$ is a distribution on
$({\bf L}, {\bf F})$, and $\mathrm{ELBO}(q, g, {\bm\tau})$ is the
``Evidence Lower BOund'' (ELBO) \cite{blei_variational_2017}, a lower
bound to the ``evidence'', $\log p({\bf Z} \mid g, {\bm\tau})$:
\begin{equation}
\mathrm{ELBO}(q, g, {\bm\tau}) \colonequals 
\mathbb{E}_q[\log p({\bf Z} \mid {\bf L}, {\bf F}, {\bm\tau})] 
+ \mathbb{E}_q\bigg[\log\bigg\{\frac{p({\bf L}, {\bf F} \mid {\bf X}, {\bf Y})}
                                   {q({\bf L}, {\bf F})}\bigg\} \bigg].
\label{eq:elbo}
\end{equation}
See \citep{wang_empirical_2021, kim_flexible_2022,
  morgante_flexible_2023} for other variational empirical Bayes
algorithms derived in a similar way.

To achieve tractable update expressions for the model parameters, we
approximate the posterior $q({\bf L}, {\bf F})$ so that it factorizes
over all elements of ${\bf L}$ and all elements of ${\bf F}$
(sometimes called a ``mean field'' approximation):
\begin{equation}
\begin{aligned}
q({\bf L}, {\bf F}) &= q^{\ell}({\bf L}) \, q^f({\bf F}) \\
q^{\ell}({\bf L}) &= \prod_{i=1}^n \prod_{k=1}^K q_{ik}^{\ell}(\ell_{ik}) \\
q^f({\bf F}) &= \prod_{j=1}^p \prod_{k=1}^K q_{jk}^f(f_{jk}).
\end{aligned}
\label{eq:mean-field-approx}
\end{equation}
With this factorization (or conditional independence) qconstraint on
$q$, the right-hand part of the ELBO can be immediately decomposed
into a sum of expectations over the individual elements of ${\bf L}$
and ${\bf F}$, so we have
\begin{align}
\mathrm{ELBO}(q, g, {\bm\tau}) &= 
\mathbb{E}_q[\log(p(\bZ \mid {\bf L}, {\bf F}, {\bm\tau})] 
\nonumber \\
& \quad + \sum_{i=1}^n \sum_{k=1}^K \mathbb{E}_q \bigg[  
\log\bigg\{\frac{g_k^{(\ell)}(l_{ik}; {\bm x}_i)}
                {q_{ik}^{\ell}(\ell_{ik})}\bigg\}\bigg] 
\nonumber \\
& \quad + \sum_{j=1}^p \sum_{k=1}^K \mathbb{E}_q \bigg[ 
\log\bigg\{\frac{g_k^{(f)}(f_{jk}; {\bm y}_j)}{q_{jk}^f(f_{jk})}
\bigg\}\bigg],
\label{eq:ELBO_K_factor} 
\end{align}
where $g_k^{(\ell)}(\ell; {\bm x}_i)$ denotes the density of
$g_k^{(\ell)}({\bm x}_i)$ at $\ell$, and $g_k^{(f)}(f; {\bm y}_j)$
denotes the density of $g_k^{(f)}({\bm x}_i)$ at $f$.

\subsection{Updating the factors}
\label{sec:factors_update}

The following proposition formally connects the updates of the
individual factors $k = 1, \ldots, K$ (Step 2--4 of the algorithm in
Sec.~\ref{sec:alg_sketch}) to learning a covariate-moderated EBNM
model (Sec.~\ref{sec:ebnm}).

\begin{proposition}
\label{prop_connection}
\rm 
Let ${\bm\ell}_k = (\ell_{1k}, \ldots, \ell_{nk})^T$ denote the
$k$th column of ${\bf L}$, let ${\bm f}_k = (f_{1k}, \ldots, f_{pk})^T$
denote the $k$th column of ${\bf F}$, 
let $\bar{\bm\ell}_k = \mathbb{E}_q[{\bm\ell}_k]$,
$\bar{\bm f}_k = \mathbb{E}_q[{\bm f}_k]$,
$\bar{\bm\ell}_k^2 = \mathbb{E}_q[{\bm\ell}_k^2]$ and
$\bar{\bm f}_k^2 = \mathbb{E}_q[{\bm f}_k^2]$,
and we further define
\begin{align}
q_k^{\ell}({\bm\ell}_k) \colonequals&\; 
\prod_{i=1}^n q_{ik}^{\ell}(\ell_{ik}) \\
q_k^f({\bm f}_k) \colonequals&\; \prod_{j=1}^p q_{jk}^{f}(f_{jk}).
\end{align}
Let $\bar{\bf R}^k$ denote the $n \times p$ matrix of expected
residuals (with elements $\bar{r}_{ij}^k$) that ignores the
contribution of the $k$th factor,
\begin{equation}
\bar{r}_{ij}^{k} \colonequals 
z_{ij} - \sum_{k' \neq k} \bar{{l}}_{ik'} \bar{{f}}_{jk'}.
\end{equation}
Also define $\hat{\bm\ell}(\bZ, {\bm t}, {\bm w}, {\bm\tau})$,
$\hat{\bm f}(\bZ, {\bm t}, {\bm w}, {\bm\tau})$, ${\bm
  s}^{\ell}({\bm w}, {\bm\tau})$ and ${\bm s}^f({\bm w}, {\bm\tau})$ as
vector-valued functions in which the individual vector elements given
by
\begin{align}
\hat{\ell}_i(\bZ, {\bm t}, {\bm w}, {\bm\tau})
&= \frac{\sum_{j=1}^p \tau_{ij} z_{ij} t_j}
        {[s_i^{\ell}({\bm w}, {\bm \tau})]^2} 
\label{eq:l-hat} \\
\hat{f}_j(\bZ, {\bm t}, {\bm w}, {\bm\tau})
&= \frac{\sum_{i=1}^n \tau_{ij} z_{ij} t_i}
        {[s_j^f({\bm w}, {\bm\tau})]^2} 
\label{eq:f-hat} \\
s_i^{\ell}({\bm w}, {\bm \tau}) 
&= \textstyle (\sum_{j=1}^p \tau_{ij} w_j)^{-1/2} 
\label{eq:se-l} \\
s_j^f({\bm w}, {\bm\tau})
&= \textstyle (\sum_{i=1}^n \tau_{ij} w_i)^{-1/2}.
\label{eq:se-f}
\end{align}
Then using the definition of the ELBO in \eqref{eq:elbo} and the cEBNM
mapping defined in \eqref{eq:cEBNM_mapping}, we have that
\begin{align}
\label{eq:max_ql}
\textstyle \argmax_{q_k^{\ell}, \, g_k^{(\ell)} \,\in\, \mathcal{G}_{\ell,k}}
\mathrm{ELBO}(q, g, {\bm\tau}) &= 
\mathrm{cEBNM}(\hat{\bm\ell}(\bar{\bf R}^k, \bar{\bm f}_k, 
\bar{\bm f}_k^2, {\bm\tau}), 
{\bm s}_{\bm\ell}(\bar{\bm f}_k^2, {\bm\tau}), \bX, \mathcal{G}_{\ell,k}) \\
\label{eq:max_qf}
\textstyle \argmax_{q_k^f, \, g_k^{(f)} \,\in\, \mathcal{G}_{f,k}} 
\mathrm{ELBO}(q, g, {\bm\tau}) &= 
\mathrm{cEBNM}(\hat{\bm f}(\bar{\bf R}^k, \bar{\bm\ell}_k, 
\bar{\bm\ell}_k^2, {\bm\tau}), {\bm s}_{\bm f}(\bar{\bm\ell}_k^2, 
\bm \tau), \bY, \mathcal{G}_{f,k}).
\end{align}
Note that this identity requires a slight change to the definition of
the cEBNM mapping \eqref{eq:cEBNM_mapping} as returning the priors
$g({\bm d}_i, \hat{\bm\theta})$ at $\hat{\bm\theta}$ rather than the
parameter estimates $\hat{\bm\theta}$ themselves.
\end{proposition}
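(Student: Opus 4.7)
The strategy is to rearrange the ELBO so that, after dropping terms that are constant in $(q_k^\ell, g_k^{(\ell)})$, it takes exactly the form of the variational objective for a cEBNM problem with inputs $(\hat{\bm\ell}, \bm s^\ell, \bX, \mathcal G_{\ell,k})$. This mirrors the single-factor update derivation in \citep{wang_empirical_2021}; the only new ingredient is that the prior now depends on $\bm x_i$, which changes only the prior piece of the ELBO (not the Gaussian likelihood piece), so the decomposition carries through essentially unchanged.

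\textbf{Step 1: isolate the $k$-dependent terms.} Starting from \eqref{eq:ELBO_K_factor}, the only contribution of $(q_k^\ell, g_k^{(\ell)})$ to the prior/entropy sums is the $k$th row, which equals $-\sum_i \mathrm{KL}\bigl(q_{ik}^\ell \,\|\, g_k^{(\ell)}(\cdot; \bm x_i)\bigr)$. For the expected log-likelihood I would expand $\mathbb E_q\bigl[(z_{ij}-\sum_{k'}\ell_{ik'}f_{jk'})^2\bigr]$ using the mean-field factorization \eqref{eq:mean-field-approx}, write $z_{ij}-\sum_{k'}\ell_{ik'}f_{jk'} = (\bar r_{ij}^k - \ell_{ik}f_{jk}) + \sum_{k'\neq k}(\bar\ell_{ik'}\bar f_{jk'}-\ell_{ik'}f_{jk'})$, and observe that, under the mean-field factorization, every cross-expectation involving an index $k'\neq k$ collapses to a quantity that does not depend on $q_k^\ell$ or $g_k^{(\ell)}$.

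\textbf{Step 2: complete the square.} What remains of the expected log-likelihood as a functional of $q_k^\ell$ is a sum over $i$ of a linear combination of $\mathbb E_{q_{ik}^\ell}[\ell_{ik}]$ with coefficient $\sum_j \tau_{ij}\bar r_{ij}^k \bar f_{jk}$ and $\mathbb E_{q_{ik}^\ell}[\ell_{ik}^2]$ with coefficient $-\tfrac12\sum_j \tau_{ij}\bar f_{jk}^2$. Substituting the definitions \eqref{eq:l-hat} and \eqref{eq:se-l} with $\bm t = \bar{\bm f}_k$, $\bm w = \bar{\bm f}_k^2$, and completing the square in $\ell_{ik}$, this expression becomes $\sum_i \mathbb E_{q_{ik}^\ell}\bigl[\log N(\hat\ell_i;\ell_{ik},(s_i^\ell)^2)\bigr]$ up to an additive constant. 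Adding the KL contribution from Step 1, the $(q_k^\ell, g_k^{(\ell)})$-dependent part of the ELBO is exactly the standard variational lower bound on the marginal log-likelihood \eqref{eqn:lik} of a cEBNM model with observations $\hat{\bm\ell}$, standard deviations $\bm s^\ell$, covariates $\bX$, and prior family $\mathcal G_{\ell,k}$.

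\textbf{Conclusion and main obstacle.} Because the inner maximization over an unrestricted factorized $q_k^\ell = \prod_i q_{ik}^\ell$ recovers the exact cEBNM posterior for each $\bm x_i$ (and hence attains $\log\mathcal L(\bm\theta)$), joint maximization over $(q_k^\ell, g_k^{(\ell)})$ returns precisely the pair produced by the $\mathrm{cEBNM}$ mapping in \eqref{eq:cEBNM_mapping}, which proves \eqref{eq:max_ql}; the identity \eqref{eq:max_qf} then follows by the symmetric argument with rows and columns swapped, using \eqref{eq:f-hat}--\eqref{eq:se-f}. I expect the main obstacle to be the algebraic bookkeeping in Step 1: one must verify that every cross-term in $k'\neq k$ either cancels or contributes only a constant in $(q_k^\ell, g_k^{(\ell)})$ after mean-field expectation, and that after completing the square the coefficients match \eqref{eq:l-hat}--\eqref{eq:se-l} exactly. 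Once that matching is in place, recognizing the resulting objective as the cEBNM variational lower bound is immediate.
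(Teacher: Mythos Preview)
Your proposal is correct and follows essentially the same approach as the paper: isolate the $(q_k^\ell, g_k^{(\ell)})$-dependent part of the ELBO as a quadratic in $\ell_{ik}$ with coefficients $b_{ik}=\sum_j \tau_{ij}\bar r_{ij}^k\bar f_{jk}$ and $a_{ik}=\sum_j \tau_{ij}\bar f_{jk}^2$, then recognize this as the variational objective for a cEBNM problem. The only organizational difference is that the paper packages your ``Step 2 + Conclusion'' (completing the square and identifying the cEBNM ELBO) into a separate lemma (Lemma~\ref{lemma:cebnm}) rather than doing it inline.
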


\begin{proof}  
Starting from \eqref{eq:ELBO_K_factor}, 
we expand on the parts of ELBO that involve $q_k^{\ell}$ or
$g_k^{(\ell)}$ or both:
\begin{equation}
\mathrm{ELBO}(q, g, {\bm\tau}) 
= -\frac{1}{2}\sum_{i=1}^n 
\mathbb{E}_q[a_{ik} {l}_{ik}^2 - 2 b_{ik}{l}_{ik}]
+ \sum_{i=1}^n \mathbb{E}_q \bigg[
\log\bigg\{\frac{g_k^{(\ell)}(\ell_{ik}; {\bm x}_i)}
                {q_{ik}^{(\ell)}(\ell_{ik})}\bigg\}\bigg]
+ \mbox{const},
\label{eq:elbo_expanded}
\end{equation}
where ``const'' is a placeholder for the terms in the ELBO that do not
depend on the $k$th factor, and we define
\begin{align}
a_{ik} \colonequals&
\sum_{j=1}^p \tau_{ij} (\bar{f}_{jk})^2 \\
b_{ik} \colonequals& 
\sum_{j=1}^p \tau_{ij} \bar{r}_{ij}^k \bar{f}_{jk}.
\end{align}
The identity \eqref{eq:max_ql} then follows from Lemma
\ref{lemma:cebnm} (given below). The other identity \eqref{eq:max_qf}
is proved similarly.
\end{proof}

\subsection{Updating the residual variances}
\label{sec:resid_var_update}

Focussing on the part of the ELBO depends on ${\bm\tau}$, we have
\begin{equation}
\mathrm{ELBO}(q, g, {\bm\tau}) = 
\frac{1}{2}
\sum_{i=1}^n \sum_{j=1}^p (\log\tau_{ij}  
- \tau_{ij} \bar{r}_{ij}^2)  + \mbox{const},
\label{eq:elbo-tau}
\end{equation}
in which ``const'' is a placeholder for the other terms in the ELBO
that do not involve ${\bm\tau}$, and $\bar{r}_{ij}^2$ is the expected
squared difference between the observation $z_{ij}$ and the value
predicted by the matrix factorization:
\begin{equation}
\bar{r}_{ij}^2 \colonequals \mathbb{E}_q[(z_{ij} - \hat{z}_{ij})^2],
\end{equation}
where
\begin{equation}
\hat{z}_{ij} = \sum_{k=1}^K l_{ik} f_{jk}.
\end{equation}

If one makes the modeling assumption that all the residual variances
are the same, i.e., $\tau_{ij} = \tau$, then from
\eqref{eq:elbo-tau} the update for $\tau$ works out to
\begin{equation}
\tau = \frac{n \times p}{\sum_{i=1}^n \sum_{j=1}^p \bar{r}_{ij}^2}.
\label{eq:tau-update-all-equal}
\end{equation}
If instead one makes the weaker modeling assumption that the residual
variances are the same in each column, i.e., $\tau_{ij} = \tau_j, j =
1, \ldots, p$, then the updates work out to
\begin{equation}
\tau_j = \frac{n}{\sum_{i=1}^n \bar{r}_{ij}^2}.
\label{eq:tau-update-rows-equal}
\end{equation}
Similarly, for row-specific residual variances the updates are
\begin{equation}
\tau_i = \frac{p}{\sum_{j=1}^p \bar{r}_{ij}^2}.
\label{eq:tau-update-cols-equal}
\end{equation}
For all these expressions, the squared differences $\bar{r}_{ij}^2$
are easily computed given the conditional independence assumptions of
the fully-factorized approximation \eqref{eq:mean-field-approx}:
\begin{equation}
\bar{r}_{ij}^2 = 
\bigg(z_{ij} - \sum_{k=1}^K \bar{l}_{ik} \bar{f}_{jk}\bigg)^2 
+ \sum_{k=1}^K (\bar{l}_{ik}^2) (\bar{f}_{jk}^2)
- \sum_{k=1}^K (\bar{l}_{ik} \bar{f}_{jk})^2,
\label{eq:expected_res}
\end{equation}
in which we have defined $\bar{l}_{ik} \colonequals
\mathbb{E}_q[l_{ik}]$, $\bar{f}_{jk} \colonequals
\mathbb{E}_q[f_{jk}]$, $\bar{l}_{ik}^2 \colonequals
\mathbb{E}_q[l_{ik}^2]$ and $\bar{f}_{jk}^2 \colonequals
\mathbb{E}_q[f_{jk}^2]$.

\subsection{Covariate-moderated EBNM}

To complete the proof of Proposition \ref{prop_connection}, it remains
to show that the identity \eqref{eq:max_ql} is satisfied at the
objective function given by \eqref{eq:elbo_expanded}. (And similarly
for the identity \eqref{eq:max_qf}.) This connection is made in the
following lemma.

\begin{lemma}
\label{lemma:cebnm}
\rm 
Consider the cEBNM mapping defined in \eqref{eq:cEBNM_mapping}. An
equivalent definition of this mapping is
\begin{equation}
\textstyle 
(\hat{\bm\theta}, \hat{q}) = 
\argmax_{{\bm\theta}, \, q} 
F({\bm\theta}, q; \hat{\bm\beta}, {\bm s}, {\bf D}),
\end{equation}
where 
\begin{equation}
\label{eq:objective-lemma}
F({\bm\theta}, q; \hat{\bm\beta}, {\bm s}, {\bf D}) 
= -\frac{1}{2} \sum_{i=1}^n  \mathbb{E}_q[
a_i \beta_i^2 - 2b_i\beta_i] + 
\sum_{i=1}^n \mathbb{E}_q \bigg[\log\bigg\{
\frac{g(\beta_i; {\bm d}_i, {\bm\theta})}{q_i(\beta_i)}
\bigg\}\bigg],
 \end{equation}
and $g(\beta; {\bm d}_i, {\bm\theta})$ denotes the density of $g({\bm
  d}_i, {\bm\theta})$ at $\beta$, and we further define
\begin{align}
q({\bm\beta}) &= \textstyle \prod_{i=1}^n q_i(\beta_i) \\
a_i &= 1/s_i^2 \\
b_i &= \hat{\beta}_i/s_i^2.
\end{align}
\end{lemma}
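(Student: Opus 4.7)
The plan is to recognize $F(\bm\theta, q; \hat{\bm\beta}, \bm s, \bf D)$ as (up to an additive constant in $\bm\theta$ and $q$) a standard evidence lower bound for the model $\hat\beta_i \sim N(\beta_i, s_i^2)$, $\beta_i \sim g(\bm d_i, \bm\theta)$, with variational family $q(\bm\beta) = \prod_i q_i(\beta_i)$. The key algebraic manipulation is to observe that, substituting $a_i = 1/s_i^2$ and $b_i = \hat\beta_i/s_i^2$,
\begin{equation}
-\tfrac{1}{2}\,\mathbb{E}_q[a_i \beta_i^2 - 2 b_i \beta_i]
= \mathbb{E}_q\!\left[\log N(\hat\beta_i; \beta_i, s_i^2)\right] + c_i,
\end{equation}
where $c_i = \tfrac{1}{2}\log(2\pi s_i^2) + \hat\beta_i^2/(2 s_i^2)$ depends only on the known quantities $(\hat\beta_i, s_i)$ and not on $(\bm\theta, q)$. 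Summing over $i$, this shows that $F(\bm\theta, q; \hat{\bm\beta}, \bm s, \bf D) = \mathrm{ELBO}(\bm\theta, q) + \mathrm{const}$, where $\mathrm{const}$ is independent of $(\bm\theta, q)$ and
\begin{equation}
\mathrm{ELBO}(\bm\theta, q) = \mathbb{E}_q[\log p(\hat{\bm\beta}\mid \bm\beta, \bm s)] + \mathbb{E}_q\!\left[\log\frac{g(\bm\beta; \bf D, \bm\theta)}{q(\bm\beta)}\right].
\end{equation}

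Next I would invoke the standard ELBO identity,
\begin{equation}
\mathrm{ELBO}(\bm\theta, q) = \log p(\hat{\bm\beta}\mid \bm s, \bm\theta, \bf D) - \mathrm{KL}\!\left(q(\bm\beta)\,\|\,p(\bm\beta \mid \hat{\bm\beta}, \bm s, \bm\theta, \bf D)\right),
\end{equation}
and then exploit the crucial structural fact that \emph{both} the likelihood $\prod_i N(\hat\beta_i; \beta_i, s_i^2)$ \emph{and} the prior $\prod_i g(\beta_i; \bm d_i, \bm\theta)$ factorize across $i$. Hence the true posterior also factorizes, $p(\bm\beta \mid \hat{\bm\beta}, \bm s, \bm\theta, \bf D) = \prod_i p(\beta_i \mid \hat\beta_i, s_i, \bm\theta, \bf D)$, so the mean-field family $q(\bm\beta) = \prod_i q_i(\beta_i)$ \emph{contains} the true posterior. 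The mean-field restriction is therefore vacuous, and for any fixed $\bm\theta$ the inner maximum over $q$ is attained at $q_i(\beta_i) = p(\beta_i \mid \hat\beta_i, s_i, \bm\theta, \bf D)$, driving the KL term to zero.

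Substituting this inner maximum back, $\max_q F(\bm\theta, q) = \log p(\hat{\bm\beta}\mid \bm s, \bm\theta, \bf D) + \mathrm{const} = \log \mathcal{L}(\bm\theta) + \mathrm{const}$ in the notation of~\eqref{eqn:lik}. Maximizing over $\bm\theta$ therefore yields the marginal-likelihood maximizer $\hat{\bm\theta}$ of \eqref{eqn:argmax_theta}, and by the previous paragraph the corresponding optimal $q$ is exactly $\hat q(\bm\beta) = \prod_i p(\beta_i \mid \hat\beta_i, s_i, \hat{\bm\theta}, \bf D)$, which matches the definition of $\hat q$ in the cEBNM mapping \eqref{eq:cEBNM_mapping}.

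I expect the only nontrivial step to be the completion-of-the-square identification of the quadratic term with $\mathbb{E}_q[\log N(\hat\beta_i; \beta_i, s_i^2)]$ modulo constants; after that, the result is essentially a one-line application of the ELBO/KL decomposition together with the observation that mean-field is exact under an i.i.d.\ (across $i$) likelihood-prior pair. No extra regularity assumptions beyond those implicit in the definition of $\mathcal{L}(\bm\theta)$ (so that the integrals and expectations in question exist) are needed.
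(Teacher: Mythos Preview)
Your proposal is correct and follows essentially the same route as the paper: identify the quadratic term in $F$ with $\mathbb{E}_q[\log N(\hat\beta_i;\beta_i,s_i^2)]$ up to an additive constant, then invoke the ELBO $=$ log-marginal $-$ KL identity to conclude that the joint maximizer is $(\hat{\bm\theta},\hat q)$ as defined in the cEBNM mapping. Your explicit remark that the true posterior itself factorizes across $i$ (so the mean-field constraint is vacuous) is a point the paper leaves implicit, but the argument is otherwise the same.
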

 
\begin{proof}
We begin with the ELBO for the cEBNM model (\ref{eqn:nm_problem},
\ref{eq:cebnm_prior}):
%
%
%
\begin{align}
\mathrm{ELBO}({\bm\theta}, q; \hat{\bm\beta}, {\bm s}, {\bf D}) = 
\log\mathcal{L}({\bm\theta}) 
- D_{\mathrm{KL}}(q \,\|\, p_{\mathsf{post}}).
\end{align}
where $D_{\mathrm{KL}}(q \,\|\, p)$ denotes the Kullback-Leibler (K-L)
divergence from a distribution $p$ to a distribution $q$
\citep{kullback1951information}, $\mathcal{L}({\bm\theta})$ is the
marginal likelihood defined in \eqref{eqn:lik}, and
$p_{\mathsf{post}}({\bm\beta})$ is the (exact) posterior distribution,
$p_{\mathsf{post}}({\bm\beta}) \colonequals \prod_{i=1}^n p(\beta_i
\mid \hat{\beta}_i, s_i, \hat{\bm\theta}, {\bf D})$ (see
eq.~\ref{eqn:ebnm_post}). Since $D_{\mathrm{KL}}(q \,\|\, p)$ is
always zero or greater, and is exactly zero when $p = q$, we have that
$\argmax_q \mathrm{ELBO}({\bm\theta}, g; \hat{\bm\beta}, {\bm s}, {\bf
  D}) = p_{\mathsf{post}}$ and $\max_q \mathrm{ELBO}({\bm\theta}, g;
\hat{\bm\beta}, {\bm s}, {\bf D}) = \log
\mathcal{L}({\bm\theta})$. Next, a basic identity of the ELBO (see for
example Appendix B of \citep{wang_simple_2020}) is that the ELBO can
be rewritten as
\begin{equation}
mathrm{ELBO}({\bm\theta}, q; \hat{\bm\beta}, {\bm s}, {\bf D}) =
\mathbb{E}_q[\log p(\hat{\bm\beta} \mid {\bm\beta}, {\bm s})] +
\sum_{i=1}^n \mathbb{E}_q\bigg[\log \bigg\{
\frac{g(\beta_i; {\bm d}_i, {\bm\theta})}{q_i(\beta_i)}
\bigg\}\bigg].
\label{eq:objective-lemma-almost-there}
\end{equation}
To complete the proof, we expand terms in the log-likelihood in 
\eqref{eq:objective-lemma-almost-there}:
\begin{equation}
\log p(\hat{\bm\beta} \mid {\bm\beta}, {\bm s})
= -\frac{1}{2}\sum_{i=1}^n \frac{(\beta_i - \hat{\beta}_i)^2 }{s_i^2}
+ \mbox{const},
\end{equation}
where ``const'' is a placeholder for terms that do not involve $q$ (or
$g$). Plugging this identity into
\eqref{eq:objective-lemma-almost-there}, and with a bit of additional
algebraic manipulation, we recover \eqref{eq:objective-lemma}.
\end{proof}

\subsection{Detailed algorithms}

In summary, the cEBMF algorithm is a block co-ordinate ascent
algorithm \citep{wright2015coordinate} for finding a local maximum of
the ELBO \eqref{eq:elbo}, in which the ``blocks''---i.e., the subsets
of parameters to be updated---are the individual factors $k = 1,
\ldots, K$ (Sec.~\ref{sec:factors_update}) and the residual variances
${\bm\tau}$ (Sec.~\ref{sec:resid_var_update}). This co-ordinate ascent
algorithm is described in Algorithm \ref{alg:Backfitting_update}, and
the single-factor update is described in Algorithm
\ref{alg:single_update}. (And it is described informally in
Sec.~\ref{sec:alg_sketch}.) In practice, we run Algortithm
\ref{alg:Backfitting_update} until the increase in the ELBO across two
successive iterations is smaller than some specified tolerance, or
until we have reached an upper bound on the number of iterations.

\begin{algorithm}[t]
\caption{cEBMF algorithm}
\label{alg:Backfitting_update}
\begin{algorithmic}

\REQUIRE $n \times p$ data matrix, $\bZ$; covariate or ``side
information'' matrices, ${\bf X}$ $(n \times n_x)$ and ${\bf Y}$ $(p
\times n_y)$; $K$, the number of factors; 
the prior families $\mathcal{G}_{\ell,k}$ and $\mathcal{G}_{f,k}$,
$k = 1, \ldots, K$; and initial estimates of the
first and second moments of ${\bf L}$ $(n \times K)$, ${\bf F}$ $(p
\times K)$, which are denoted by $\bar{\bf L}$, $\bar{\bf F}$,
$\bar{\bf L}^2$, $\bar{\bf F}^2$.

\STATE Compute the expected residuals, 
$\bar{\bf R} = {\bf Z} - \bar{\bf L} \bar{\bf F}^T$.

\REPEAT 

\STATE Update the residual variances ${\bm\tau}$ using
\eqref{eq:tau-update-all-equal}, \eqref{eq:tau-update-rows-equal} or
\eqref{eq:tau-update-cols-equal}.

\FOR{$k = 1, \ldots, K$}

\STATE Remove the effect of the $k$th factor from the expected 
  residuals, $\bar{\bf R}^k = \bar{\bf R} + \bar{\bm\ell}_k \bar{\bm f}_k^T$.

\STATE Perform a single-factor update for factor $k$ 
(Algorithm \ref{alg:single_update}).

\STATE Update the expected residuals, 
$\bar{\bf R} = \bar{\bf R}^k - \bar{\bm\ell}_k \bar{\bm f}_k^T$.

\ENDFOR

\UNTIL some convergence criterion is met

\RETURN $\bar{\bf L}, \bar{\bf F}, {\bm\tau},
g_1^{(\ell)}, \ldots, g_K^{(\ell)}, g_1^{(f)}, \ldots, g_K^{(f)}$

\end{algorithmic}
\end{algorithm}

\algsetup{linenodelimiter=.}
\begin{algorithm}[t]
\caption{cEBMF single-factor update}
\label{alg:single_update}
\begin{algorithmic}[1]

\REQUIRE covariate or ``side information'' matrices, ${\bf X}$ $(n
\times n_x)$ and ${\bf Y}$ $(p \times n_y)$; $k \in \{1, \ldots, K\}$,
the dimension to update; the prior families $\mathcal{G}_{\ell,k}$ and
$\mathcal{G}_{f,k}$; an implementation of $\mathrm{cEBNM}(\hat{\bm
  \beta}, {\bm s}, {\bf D}, \mathcal{G}) \rightarrow (\hat{\bm\theta},
\hat{q})$ (eq.~\ref{eq:cEBNM_mapping}) for prior families $\mathcal{G}
= \mathcal{G}_{\ell,k}$ and $\mathcal{G} = \mathcal{G}_{f,k}$; the
expected residuals, $\bar{\bf R}^k$; estimates of the second moments,
$\bar{\bf L }^2$ $(n \times K)$, $\bar{\bf F}^2$ $(p \times K)$; and
the residuals variances, ${\bm\tau}$.

\STATE $\hat{\bm\beta} \leftarrow 
\hat{\bm\ell}(\bar{\bf R}^k, \bar{\bm f}_k, \bar{\bm f}_k^2, {\bm\tau})$

\STATE ${\bm s} \leftarrow {\bm s}_{\ell}(\bar{\bm f}_k^2, {\bm\tau})$

\STATE $(g_k^{(\ell)}, q_k^{\ell}) \leftarrow
\mathrm{cEBNM}(\hat{\bm\beta}, {\bm s}, {\bf X}, \mathcal{G}_{\ell,k})$

\STATE Compute posterior moments
$\bar{\ell}_{ik} \colonequals E_q[\ell_{ik}]$ and 
$\bar{\ell}_{ik}^2 \colonequals E_q[\ell_{ik}^2]$,
$i = 1, \ldots, n$.

\STATE $\hat{\bm\beta} \leftarrow \hat{\bm f}(\bar{\bf R}^k,
\bar{\bm\ell}_k, \bar{\bm\ell}_k^2, {\bm\tau})$ 

\STATE ${\bm s} \leftarrow {\bm s}_f(\bar{\bm\ell}_k^2, {\bm\tau})$

\STATE $(g_k^{(f)}, q_k^{f}) \leftarrow
\mathrm{cEBNM}(\hat{\bm\beta}, {\bm s}, {\bf Y}, \mathcal{G}_{f,k})$

\STATE Compute posterior moments
$\bar{f}_{jk} \colonequals E_q[f_{jk}]$ and 
$\bar{f}_{jk}^2 \colonequals E_q[f_{jk}^2]$, 
$j = 1, \ldots, p$.

\RETURN $\bar{\bm\ell}_k, \bar{\bm\ell}_k^2, 
\bar{\bm f}_k, \bar{\bm f}_k^2, g_k^{(\ell)}, g_k^{(f)}$

\end{algorithmic}
\end{algorithm}

Two features of the empirical Bayes approach to matrix factorization
discussed in \citep{wang_empirical_2021} are worth highlighting here.
First, there is a simple stepwise procedure for obtaining good initial
estimates of ${\bf L}$ and ${\bf F}$ by introducing the factors
sequentially. This was called a ``greedy initialization'' in
\cite{wang_empirical_2021}. Second, instead of fixing the number of
factors, the EBMF approach can also select $K$ automatically by
adapting the priors $g_k^{(\ell)}, g_k^{(f)}$ separately for each
factor $k$. The idea is that factors that are not useful for
explaining variation in the data should produce priors that are
concentrated near zero (this feature of course requires that the
chosen prior families $\mathcal{G}_{\ell,k}, \mathcal{G}_{f,k}$
include distributions that are concentrated near zero). Therefore, $K$
can initially be set to a large value, and the cEBMF algorithm will
automatically determine an appropriate number of factors by
``shrinking'' the unneeded factors.


\subsubsection{Computational complexity}

Since cEBMF is a modeling and algorithmic framework, and not a
specific method or algorithm, we cannot give the exact computational
complexity of Algorithm \ref{alg:Backfitting_update}. However, we can
provide some rules of thumb.  Steps 1, 2, 5 and 6 in Algorithm
\ref{alg:single_update} (also, Steps 1 and 2 in
Sec. \ref{sec:alg_sketch}) involve preparing the inputs for the cEBNM
solver \eqref{eq:cEBNM_mapping}.  Since these steps do not depend on
the prior families $\mathcal{G}_{\ell,k}, \mathcal{G}_{f,k}$, we can
give their computational complexity: when ${\bf Z}$ is a ``dense''
(non-sparse) matrix, the time complexity for updating a single factor
$k$ is $O(np)$; when ${\bf Z}$ is sparse, the complexity is $O(S)$,
where $S$ is the number of nonzero entries in ${\bf Z}$. (Note this
requires careful implementation that avoids directly storing $\bar{\bf
  R}$). Steps 3 and 7 in Algorithm \ref{alg:single_update} (or Steps 3
and 4 in Sec. \ref{sec:alg_sketch}) will depend on the details of the
cEBNM solver and the type of side information. However, when the
priors on ${\bf L}, {\bf F}$ are simple and involve low-dimensional
covariates, the other steps are expected to dominate, in which case
the complexity of Algorithm \ref{alg:Backfitting_update} is expected
to be $O(npK)$ or $O(SK)$.


\section{Details of the experiments}
\label{sec:experiment_details}

\subsection{Simulations}


We simulated data sets from different cEBMF models. In all cases, the
data were generated with homoskedastic noise, $\tau_{ij} = \tau$.

\paragraph*{Sparsity-driven covariate simulations.} 

This simulation was intended to illustrate the behaviour of cEBMF when
provided with simple row and column-covariates that inform only the
sparsity of ${\bf L}$ and ${\bf F}$ (and not the magnitudes of their
elements). The side information was stored in $\mbox{1,000} \times 10$
and $200 \times 10$ matrices ${\bf X}$ and ${\bf Y}$, and the
$\mbox{1,000} \times 200$ matrix ${\bf Z}$ was simulated using a
simple cEBMF model with $K = 2$
%
%
and with spike-and-slab priors chosen to ensure that 90\% of the
elements of ${\bf L}{\bf F}^T$ were zero. Specifically, we used the
following priors:
%
%
\begin{equation}
\begin{aligned}
\ell_{ik} \sim&\; 
\pi_{ik} \delta_0 + (1 - \pi_{ik}) N(0,1) \\
f_{jk} \sim&\; \alpha_{jk} \delta_0 + (1 - \alpha_{jk}) N(0,1)\\
\pi_{ik} \colonequals&\; 
\phi({\bm\theta}_k^T\bx_i) \\
\alpha_{jk} \colonequals&\;
\phi({\bm\omega}_k^T\by_j),
\end{aligned}
\end{equation}
%
%
in which ${\bm\theta}_k$ and ${\bm\omega}_k$ were chosen to achieve
90\% zeros in ${\bf L}{\bf F}^T$.
 
%
%
%
%

\paragraph*{Uninformative covariate simulations.} 

To verify that cEBMF was robust to situations in which the side
information was not helpful, we considered an ``uninformative
covariate'' setting in which the covariates were just noise. The data
sets were simulated in the same way as in the sparsity-driven covariate
simulations except that 
%
%
the true factors were simulated as $\ell_{ik} \sim \pi \delta_0 +
(1-\pi) N(0,1)$, $f_{jk} \sim \alpha \delta_0 + (1-\alpha) N(0,1)$,
with $\pi, \alpha$ chosen to achieve a target sparsity of 90\% zeros
in ${\bf L}{\bf F}^T$.



\paragraph*{Tiled-clustering simulations.} 

In this setting, we simulated rank-3 matrix factorizations in which
${\bf L}$---but not ${\bf F}$---depended on the 2-d locations of the
data points. (One of these simulations is shown in
Fig.~\ref{fig:toy_example}.) This was accomplished as follows. First,
we generated a periodic tiling of $[0,1] \times [0,1]$, randomly
labeling each tile 1, 2 or 3. For each data point $i = 1, \ldots,
\mbox{1,000}$, we sampled its 2-d location uniformly from $[0,1]
\times [0,1]$, then we set $\ell_{ik} = 1$ if the data point fell in
the tile with label $k$, otherwise $\ell_{ik} = 0$. The $200 \times 3$
matrix ${\bf F}$ was simulated from a scale mixture of zero-centered
normals that did not depend on tile membership.
%
%
%
%

\paragraph*{Shifted tiled-clustering simulations.}

To assess robustness to model misspecification, we simulated data using
a prior that could not be recovered by the prior family we used in
cEBMF. These simulations were the same as the tiled-clustering
simulations except that we generated the $i$th row of ${\bf L}$ as
follows: $(1, 2, 3)$ if data point was $i$ in the tile with label 1;
$(3, 1, 2)$ if data point was in the tile with label 2, and $(2, 3,
1)$ if data point was in the tile with label 3.

\subsection{Additional details on the methods compared}


We first describe how the methods were run on the simulated data
sets. Modifications to the methods for the MovieLens and spatial
transcriptomics data are given in the main text, with additional
technical details below.
%
%
For all methods, when possible to do so, we set the rank, $K$, to
match the rank of the simulated matrix factorization.

For EBMF and cEBMF, we assumed homoskedastic noise ($\tau_{ij} =
\tau$), and the prior families were chosen to align with how the data
were simulated (except for the shifted-tiled clustering simulations,
which were intended to illustrate the methods' behaviour when the
priors were misspecified). For EBMF, the prior families were all
elaborations of the ``spike and slab'' priors
(Sec.~\ref{sec:cebmf-spike-and-slab}). For cEBMF, the priors were of
the same form as EBMF in which the mixture weights were parameterized
using either a multinomial regression (i.e., a single-layer neural
network with a softmax link function) or a multilayer perceptron. 

In the sparsity-driven covariate and uninformative covariate
simulations, the priors for ${\bf L}$ and ${\bf F}$ in EBMF were all
scale mixtures of normals with a fixed grid of scales
\citep{stephens_false_2017, kim_fast_2019}. For cEBMF, we used priors
of the same form, except that side information was incorporated into
the prior mixture weights as follows using priors of the following form:
\begin{equation}
g({\bm d}_i, {\bm\theta}) =
\pi_0({\bm d}_i, {\bm\theta} ) \delta_0 +
\sum_{m=1}^M \pi_m ({\bm d}_i, {\bm\theta}) N(0, \sigma_m^2).
\label{eq:mixture-of-normals-prior-cebmf}
\end{equation}
The mixture weights $\pi_0, \ldots, \pi_M$ were implemented using a
standard multinomial regression model with the softmax link function.
%
%

In the tiled-clustering and shifted tiled-clustering simulations, the
true ${\bf L}$ was always non-negative. Therefore, we chose the prior
families in EBMF and cEBMF to produce {\em semi-non-negative matrix
  factorizations} \citep{ding_convex_2010} with a non-negative ${\bf
  L}$. Specifically, we assigned mixture-of-exponentials priors to
${\bf L}$, similar to the scale-mixture-of-normals priors, except with
support for non-negative numbers only \cite{ebnm}. And we assigned the
scale-mixture-of-normal priors, same as above, to ${\bf F}$.  In
cEBMF, side information was incorporated into the mixture weights in
the prior in a manner similar to above:
\begin{equation}
g({\bm d}_i, {\bm\theta}) =
\pi_0({\bm d}_i, {\bm\theta}) \delta_0 +
\sum_{m=1}^M \pi_m({\bm d}_i, {\bm\theta}) \exp(\lambda_m),
\label{eq:mixture-of-exponentials-prior-cebmf}
\end{equation}
in which $\exp(\lambda)$ denotes the exponential distribution with
scale parameter $\lambda$, and $\lambda_{m-1} < \lambda_m$, $m = 2,
\ldots, M$. As before, the mixture weights $\pi_0, \ldots, \pi_M$
were implemented using a standard multinomial regression model with
the softmax link function.


The deep learning methods (VAE, cVAE, NCF) were all implemented in
PyTorch. All the models were trained for 50 epochs using the Adam
optimizer with learning rate 0.001 and batch size 64. VAE had three
hidden layers (of width 128, 64 and 30) in both the encoder and
decoder (20 hidden dimensions). ReLU activations were used
throughout. We use the ELBO from \cite{kingma_auto-encoding_2014} to
train the model. We proceeded similarly for the cVAE, conditioning
both the encoder and decoder on the available covariate data ${\bf X},
{\bf Y}$. For cVAE, we used the training objective from
\cite{sohn_learning_2015}. NCF models ${\bf Z}$ using two separate
multilayer perceptrons for the row and column covariates
\cite{he_neural_2017}.
%
%
The multilayer perceptrons were implemented in a similar way to the VAE
encoders and decoders; that is, three hidden layers (of width 128, 64
and 30) with RELU activations.  The penalty parameters in PMD were
tuned via cross-validation as recommended by the authors. SSVD (R
package ``ssvd'') was run with its default values.




For the spatial transcriptomics data, we fit cEBMF and EBMF using
gene-specific residual variances, $\sigma_{ij}^2 = \sigma_j^2$. We
used mixture-of-exponential priors for ${\bf F}$, and the
parameterized mixture-of-exponential priors
\eqref{eq:mixture-of-exponentials-prior-cebmf} for ${\bf L}$ in which
the mixture weights were learned using a multilayer perceptron instead
of a multinomial regression.  The multilayer perceptions were defined
as sequential models with a dense layer with 64 units and ReLU
activations. We use two subsequent dense layers, each with 64 units,
and ReLU activations using an L2 regularization coefficient of 0.001 to
prevent overfitting.
%
%
These regularized layers were followed by a dropout layer (with a
dropout rate of 0.5). The subsequent layers were four dense layers
each with 64 units, ReLU activations and L2 regularization coefficient
of 0.001. The final layer was a dense layer with a softmax
activation. These models were trained during each single-factor update
using 300 epochs and a batch size of 1,500.

In the simulations, cEBMF was implemented in R, in which learning the
parameterized priors was performed using the Keras R interface
\citep{chollet_keras_2015} to TensorFlow
\citep{martin_abadi_tensorflow_2015}. For the MovieLens and spatial
transcriptomics data sets, we used the PyTorch-based implementation of
cEBMF which we have made available as a Python package on GitHub.



%
%


%

\clearpage

\section{Additional results}
\label{app:sup_fig}


\begin{figure}[ht]
\centering
\includegraphics[width=3.65in]{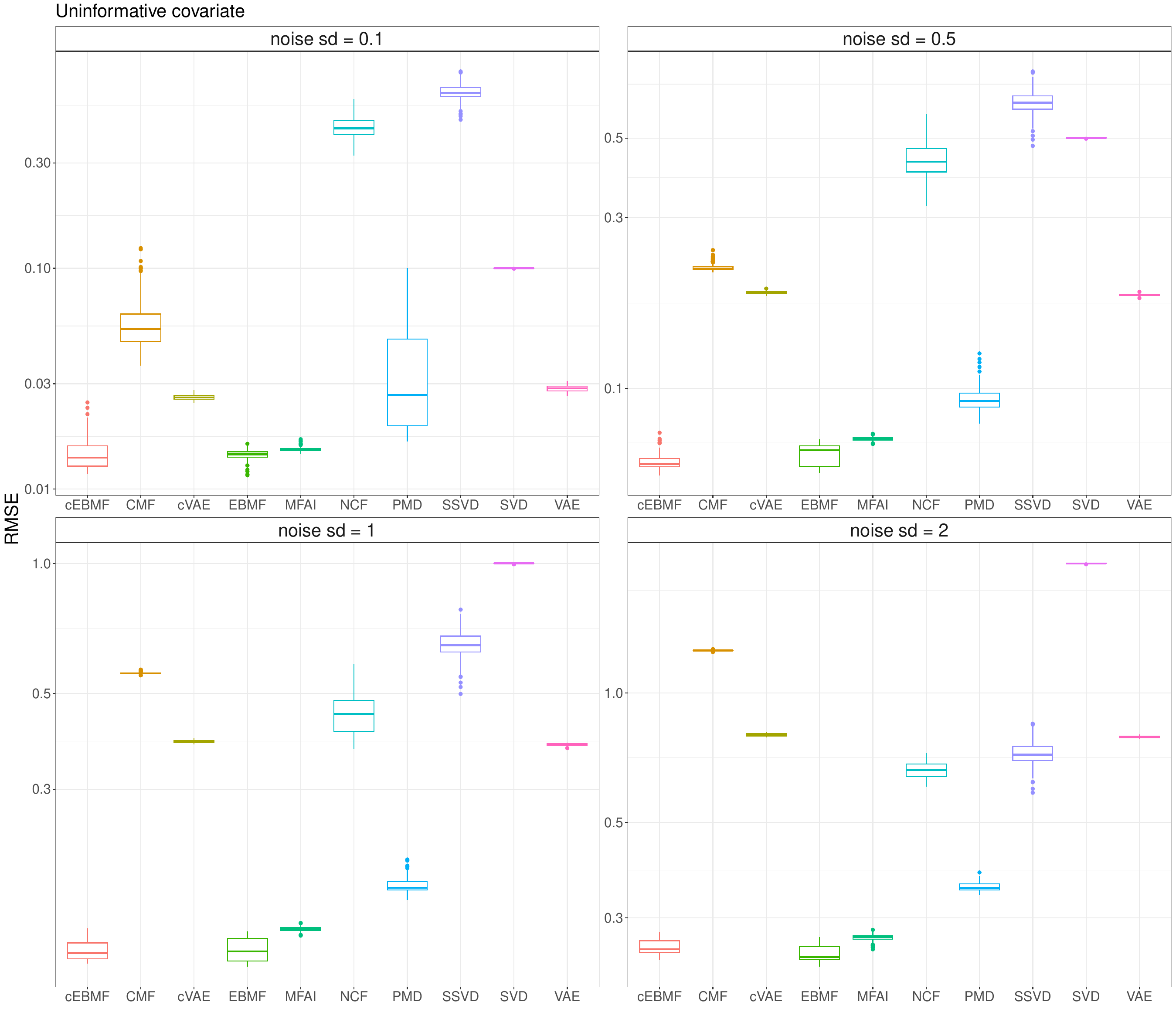}
\caption{Simulation results from the ``uninformative covariate''
  setting in which the data were simulated under different noise
  levels, $\tau$. Note that for improved visualization the RMSE is
  shown on the log-scale.}
%
%
\label{fig:uninformative-sims}
\end{figure}

\begin{figure}[ht]
\centering
\includegraphics[width=3.65in]{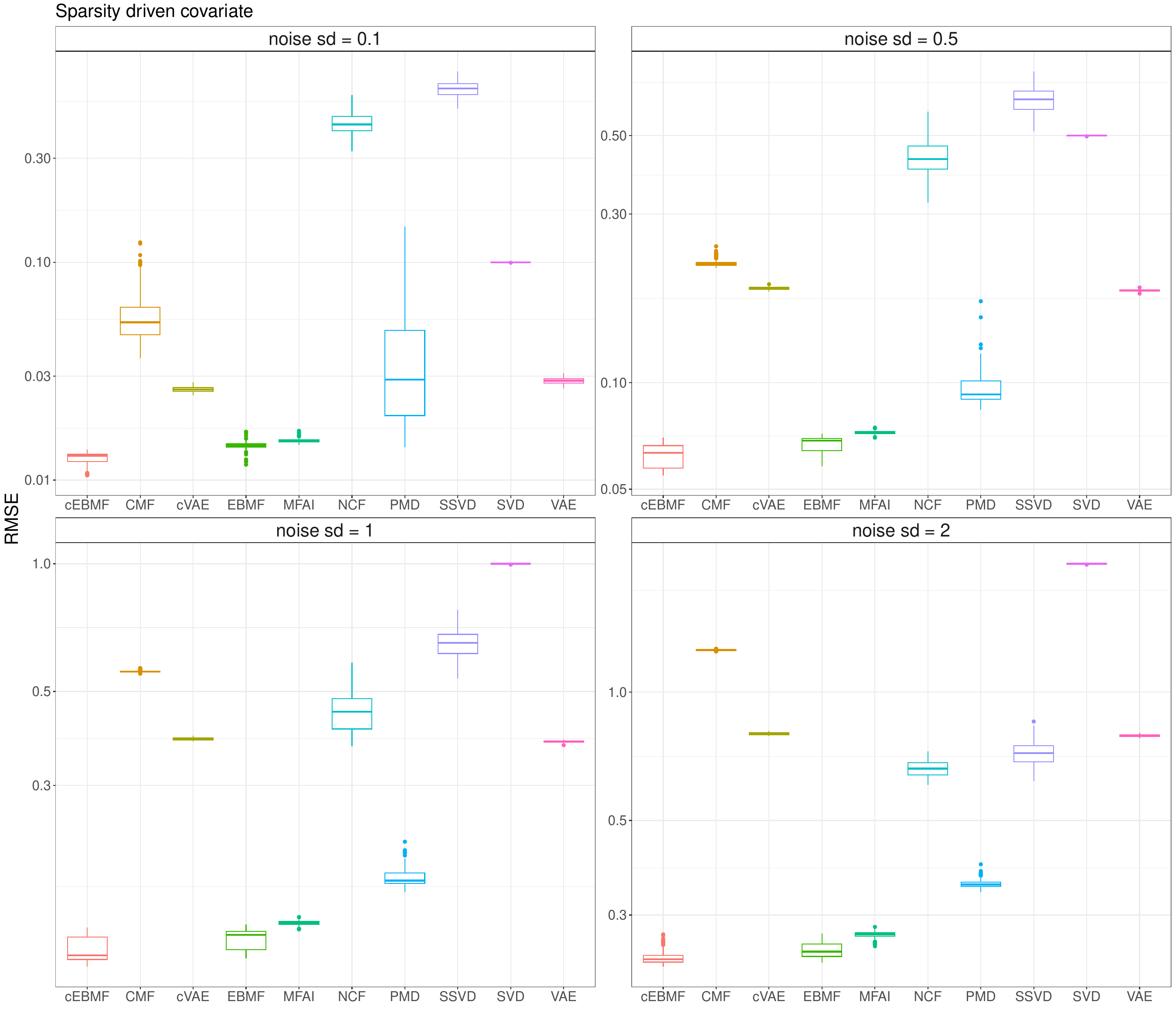}
\caption{Simulation results from the ``sparsity-driven covariate''
  setting in which the data were simulated under different noise
  levels, $\tau$. Note that for improved visualization the RMSE is
  shown on the log-scale.}
  %
  %
  %
\label{fig:sparsity-driven-sims}
\end{figure}

\begin{figure}[ht]
\centering
\includegraphics[width=3.75in]{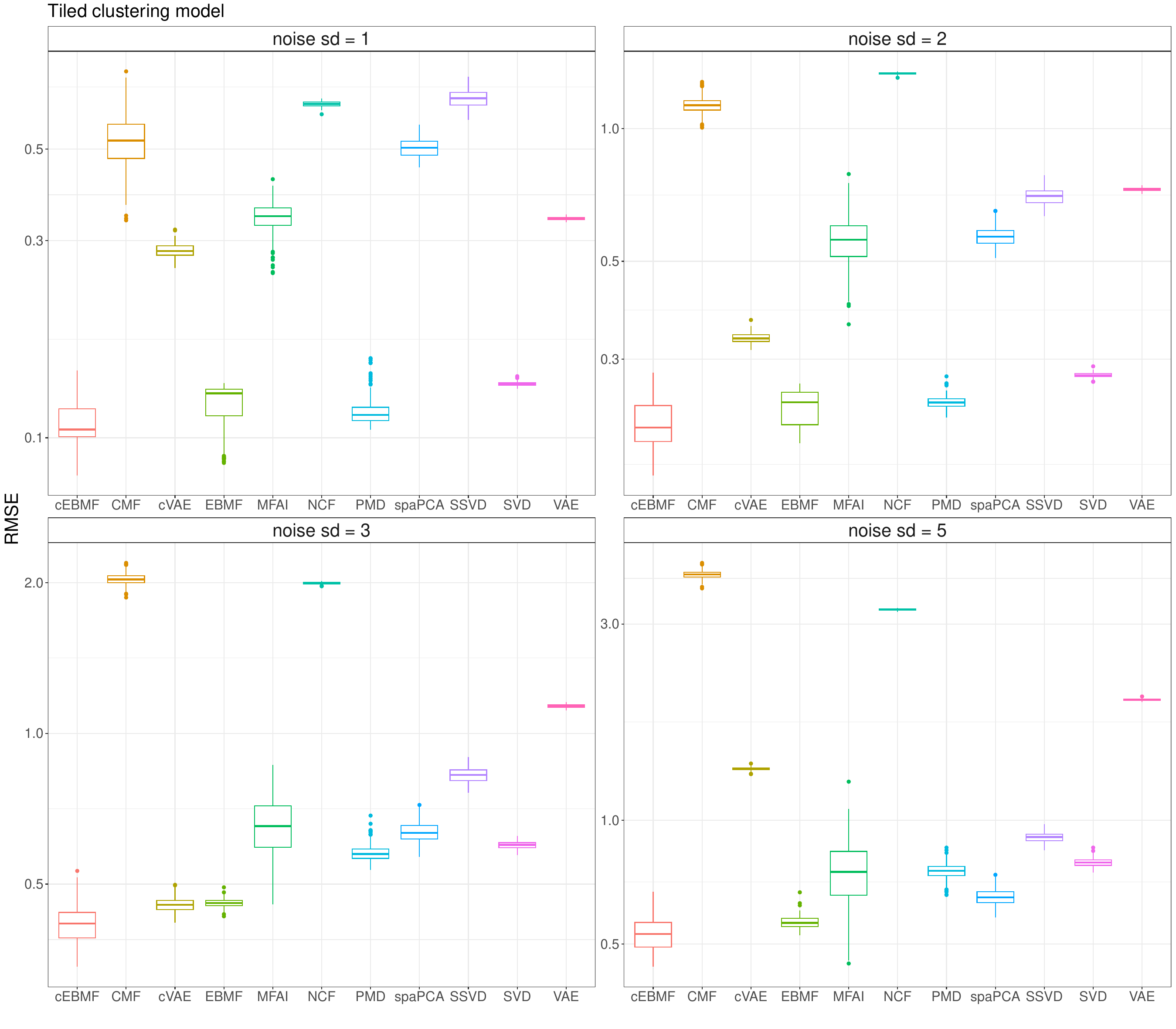}
\caption{Simulation results from the ``tiled-clustering'' setting in
  which the data were simulated under different noise levels,
  $\tau$. Note that for improved visualization the RMSE is shown on
  the log-scale. (spaPCA = Spatial PCA)}
\label{fig:tiling-sims}
\end{figure}

\begin{figure}[ht]
\centering
\includegraphics[width=3.65in]{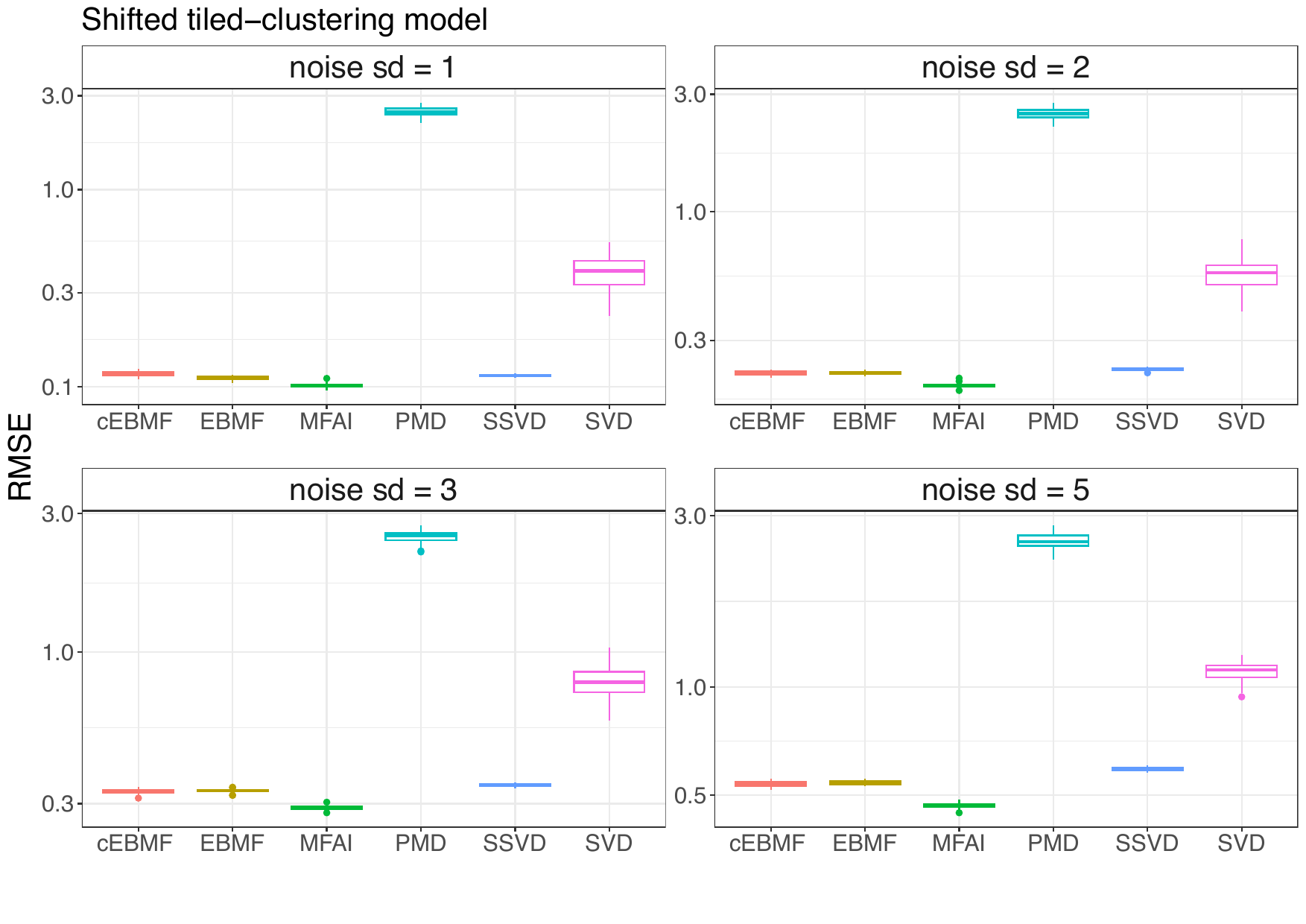}
\caption{Simulation results from the ``shifted tiled-clustering''
  setting in which the data were simulated under different noise
  levels, $\tau$. Note that for improved visualization the RMSE is
  shown on the log-scale.}
%
%
\label{fig:shifted-tiled-sims}
\end{figure}


\label{supsec:spaRNA}

\begin{figure}[t]
\centering 
\includegraphics[width=\textwidth]{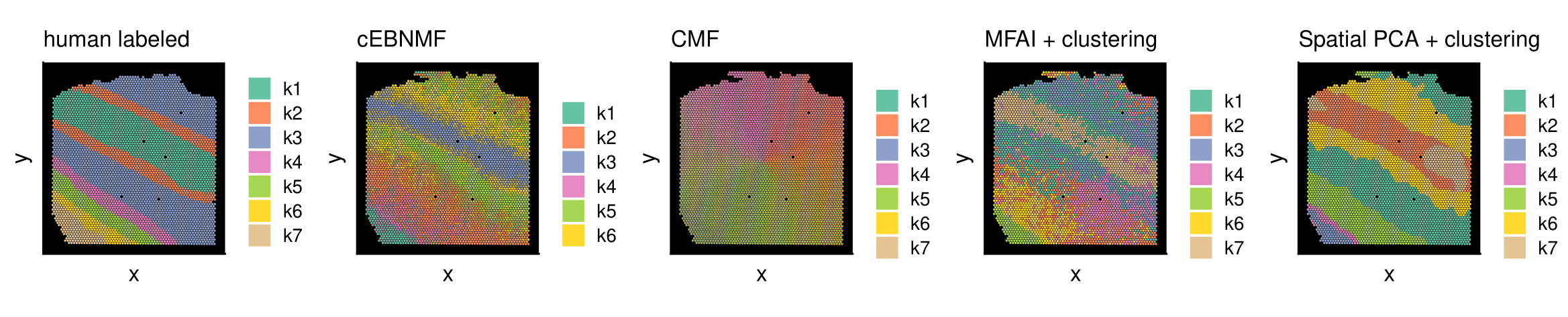}
\includegraphics[width=\textwidth]{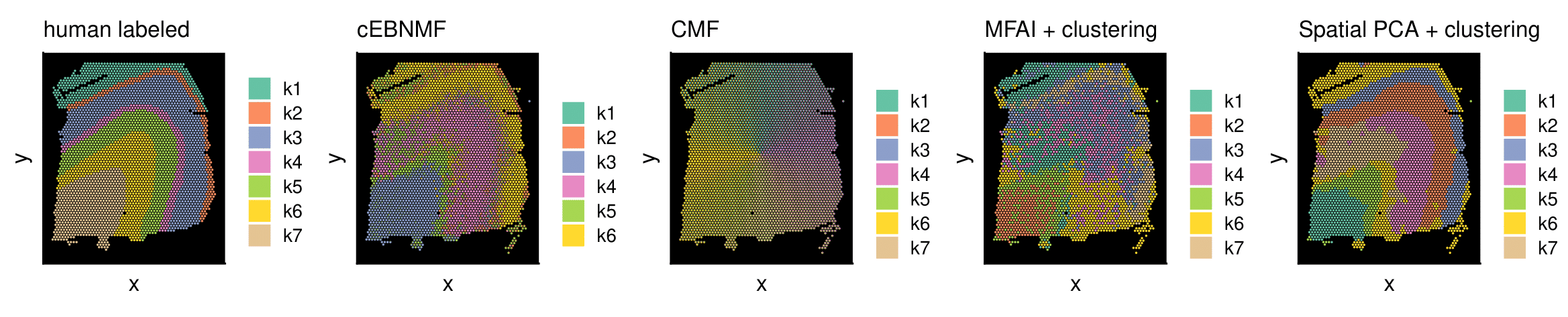}
\caption{Additional results on slides 4 (top) and 10 (bottom) of the
  DLPFC spatial transcriptomics data. See
  Fig.~\ref{fig:spatial_soft_clustering} for additional information
  about these results.}
  %
  %
  %
\label{fig:sprna_mfai_cmf}
\end{figure}

\begin{figure}[ht]
\centering
\includegraphics[width=\textwidth]{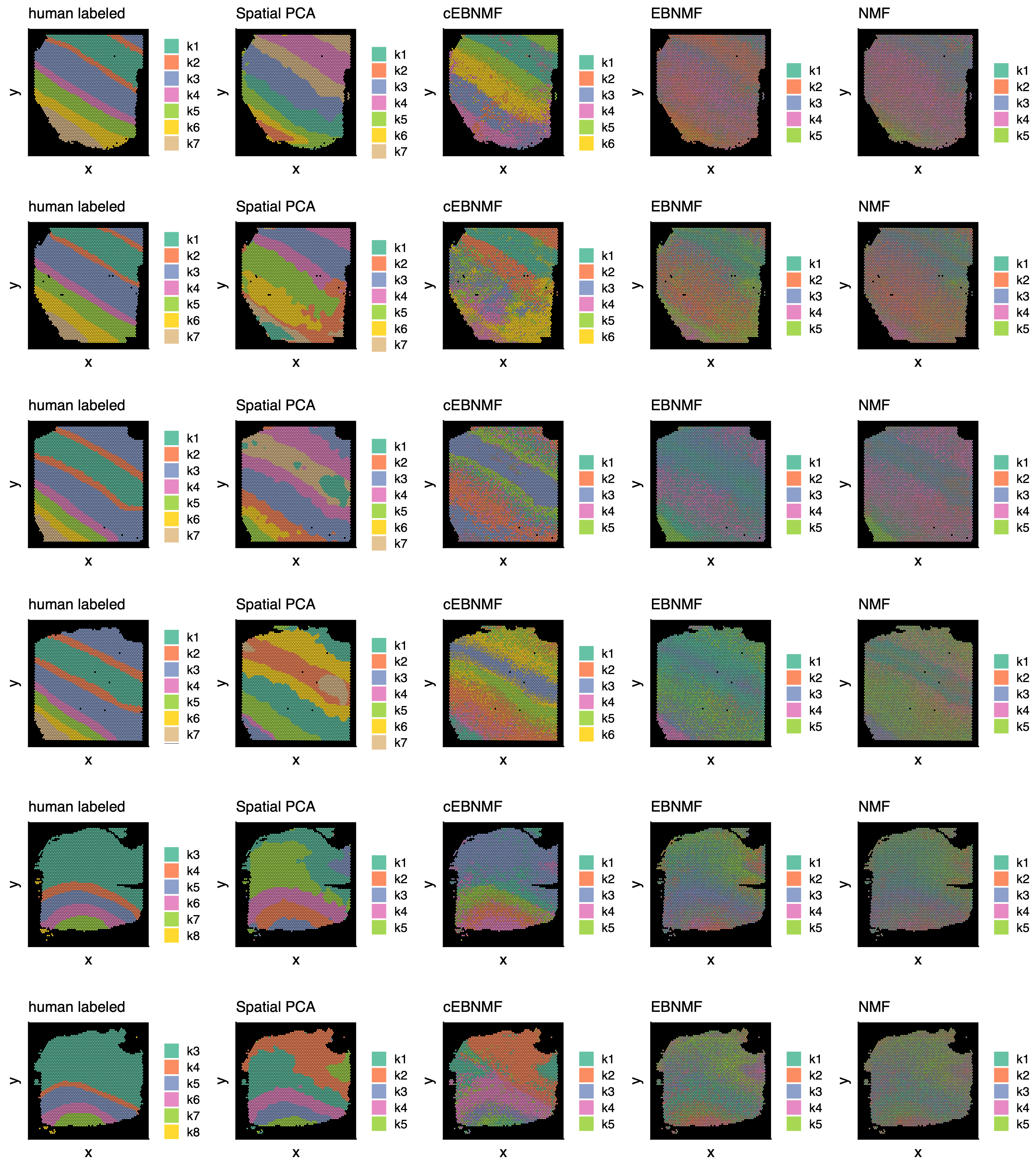}
\caption{Selected results on slices 1 (top row) through 6 (bottom row)
  of the DLPFC spatial transcriptomics data.}
\end{figure}

\begin{figure}[ht]
\centering
\includegraphics[width=\textwidth]{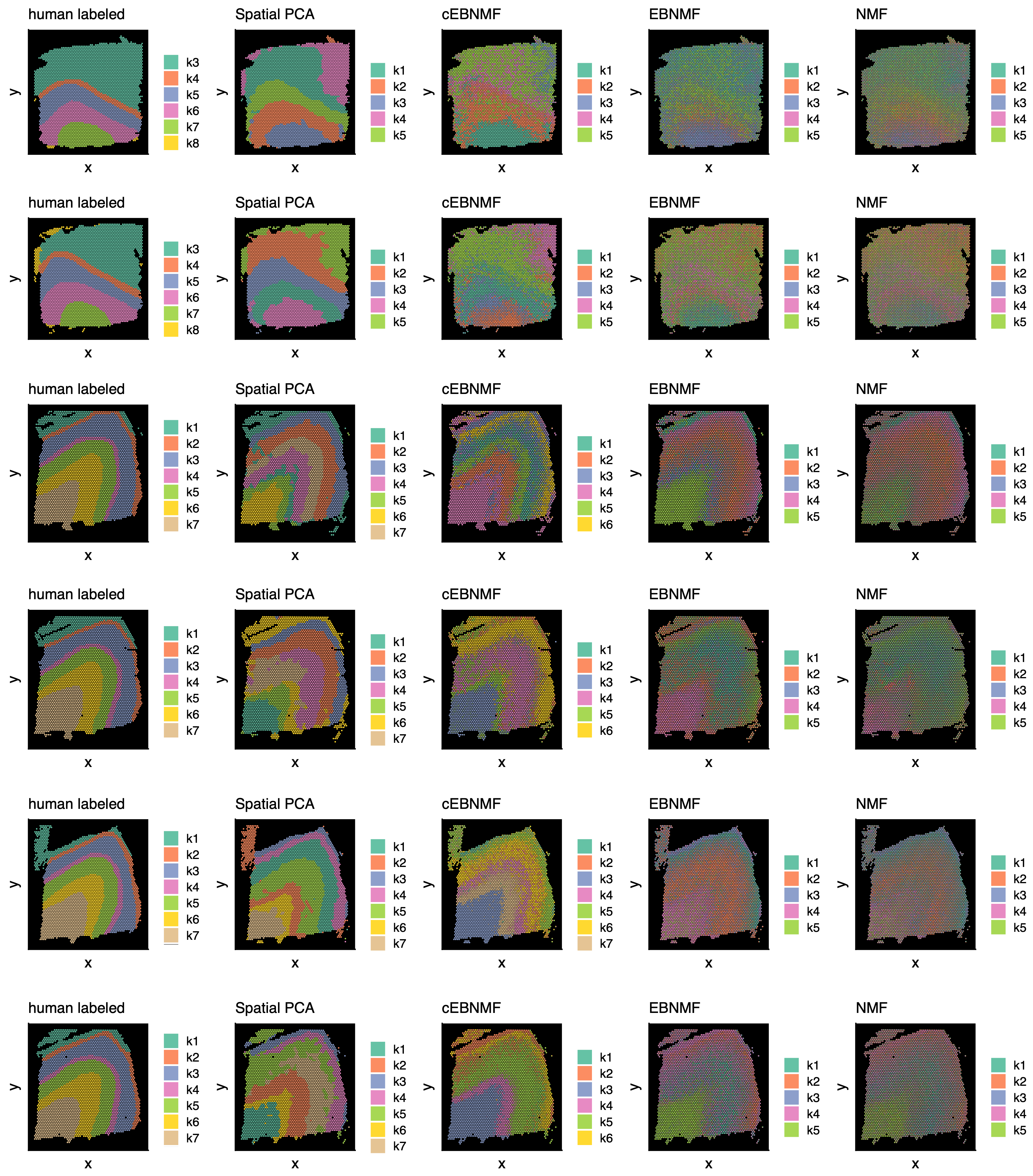}
\caption{Selected results on slices 7 (top row) through 12 (bottom
  row) of the DLPFC spatial transcriptomics data.}
\end{figure}

\end{document}